\newcolumntype{R}[1]{>{\raggedleft\arraybackslash }b{#1}}
\newcolumntype{L}[1]{>{\raggedright\arraybackslash }b{#1}}
\newcolumntype{C}[1]{>{\centering\arraybackslash }b{#1}}
\newcommand{\mI}{\mathcal{I}}
\newtheorem{theorem}{Theorem}
\newtheorem{definition}{Definition}
\newtheorem{property}{Property}
\newtheorem{proposition}{Proposition}
\newtheorem{corollary}{Corollary}
\newcommand{\sym}{\mathrm{Sym}}
\newtheorem{lemma}{Lemma}
\newcommand{\SU}{\mathrm{SU}}
\newcommand{\SL}{\mathrm{SL}}
\newcommand{\SO}{\mathrm{SO}}
\newcommand{\J}{\mathrm{J}}
\def\extd{\mathrm {d}}
\newcommand{\beq}{\begin{equation}}
\newcommand{\eeq}{\end{equation}}
\newcommand{\bea}{\begin{eqnarray}}
\newcommand{\eea}{\end{eqnarray}}
\definecolor{mygray}{gray}{0.3}
\newcommand{\bes}{\begin{eqnarray}}
\newcommand{\ees}{\end{eqnarray}}
\newcommand\restr[2]{{
  \left.\kern-\nulldelimiterspace 
  #1 
  \vphantom{\big|} 
  \right|_{#2} 
  }}
\def\extd{\mathrm {d}}
\newcommand{\U}{\mathrm{U}}
\begin{document}


\begin{center}
\textbf{\large{No Ward-Takahashi identity violation for Abelian tensorial group field theories with a closure constraint}}\\

\medskip
\vspace{15pt}

{\large Vincent Lahoche$^a$\footnote{vincent.lahoche@cea.fr}  \,\, B\^em-Bi\'eri Barth\'el\'emy Natta$^b$\footnote{nattabarth@gmail.com}\,\,
Dine Ousmane Samary$^{a,b}$\footnote{dine.ousmanesamary@cipma.uac.bj}}
\vspace{15pt}

a)\,  Commissariat à l'\'Energie Atomique (CEA, LIST),
 8 Avenue de la Vauve, 91120 Palaiseau, France

b)\, Facult\'e des Sciences et Techniques (ICMPA-UNESCO Chair),
Universit\'e d'Abomey-
Calavi, 072 BP 50, B\'enin
\vspace{0.5cm}
\begin{abstract}
\noindent 

This paper aims at investigating the nonperturbative functional renormalization group for tensorial group field theories
with nontrivial kinetic action and closure constraint. We consider the quartic melonic just-renormalizable $[U(1)]^6$ model and show that due to this closure constraint the first order Ward-Takahashi identity takes the  trivial form as for the models with propagators proportional to identity. We then construct the new version of the effective vertex expansion applicable to this class of models, which in particular allows to close the hierarchical structure of the flow equations in the melonic sector. As a consequence, there are no additional constraints on the flow equations, and then we can focus on the existence of physical Wilson-Fisher fixed-points solutions in the symmetric phase.

\medskip

\noindent
\textbf{Key words:} Renormalization group, group field theories, tensorial group field theories, tensor models, quantum gravity, random geometry, Ward-Takahashi identities.

\noindent
\textbf{pacs:} {71.70.Ej, 02.40.Gh, 03.65.-w}
\end{abstract}
\end{center}



\section{Introduction}
The quantum theory of gravitation or simply quantum gravity (QG), aiming to provide a unification between general relativity and quantum physics, is probably one of the major challenges of the century. Such a program is motivated as much by pure intellectual satisfaction as by the existence of real contact points, where quantum effects can no longer be neglected in the description of gravitational effects. This is notably the case where gravitational effects are strong enough, in the vicinity of black holes or similar compact astrophysical objects where the effects of gravity are strong, such as neutron stars, or below the Planck scale in the early universe ( see \cite{Jegerlehner:2021vqz}-\cite{Donoghue:1994dn} and references therein). Note that there isn't just one path to quantum gravity right now and a host of approaches compete with each other, or mutually enrich each other. Currently, the most active approach in terms of the number of researchers is string theory (the list of the references is so long and we consider just \cite{Witten:2002wb}). However, other approaches have emerged from the ideas of quantification of gravity, the most dominant being the loop quantum gravity (LQG) \cite{Rovelli:1997yv}-\cite{Rovelli:1998gg}. Group field theory (GFT) is one of the most popular syntheses of the canonical approach proposed by LQG. Historically it has been proposed as a field theoretical framework able to reproduce spinfoams amplitudes; without ambiguity on their relative weights, and thus suitable to discuss arbitrary large sums of such a spinfoam aiming to define a continuum limit \cite{Ambjorn:1992eh}-\cite{Oriti:2009nd}. It has also been shown that the GFT formalism corresponds to a second unification of the standard LQG \cite{Rovelli:2011eq}-\cite{Markopoulou:2006qh}. In general, it was essential that having a formalism of field theory, and all the associated mathematical ingredients (for instance the renormalization) was great progress concerning the previous approaches, in particular in the perspective of describing a continuous limit. The underlying hope is to find a fundamental quantum theory that well describes the microscopic states of space-time; which are assumed to behave collectively following the Einstein equation of gravitation to some limit. Several steps have been taken in this direction, notably in the description of the primordial universe, and for the calculation of the entropy of an isolated horizon, \cite{Steinhauer:2015saa}.
\medskip

In the absence of experiments, the best we may expect is a class of theories, all of which converge toward general relativity on a large scale. The most important tool to discuss universality about some scaling is probably the renormalization group (RG). The RG program for the quantum gravity model can be useful to explore the physical content of the theories, beyond the limits of exact methods. Such a program however has the advantage to be helpful: If (i) RG equations are easy to compute. (ii) If the approximations used to solve the flow equations are reliable. Therefore, before investigating the RG, the first ingredient we have chosen to do is to develop robust and reliable methods, allowing us to explore a large domain of the specific phase spaces of the theoretical models. This paper, like the ones cited on the same subject, aims to investigate such a methodology.
\medskip

Tensorial group field theories (GFTs) are a recent proposal to tackle the issue of universality. They provide an example of mutual enrichment, as they combine ideas arising from random tensors and GFTs. In particular, they assume the idea that a special kind of invariance is required to make the theory power-countable. The existence of such a power counting in turn is the basic ingredient to define a renormalization group flow, suitable to describe large scale physics and phase transitions \cite{Rivasseau:2014ima}-\cite{Lahoche:2019cxt}. This special kind of invariance is called \textit{tensoriality}  in the literature \cite{Gurau:2011xp}-\cite{Delporte:2019tof}, and refers to the invariance of interactions for some continuous internal symmetry group. Moreover, it has been proved that endow GFTs from such an invariance, discard singular contributions in the corresponding spinfoams amplitudes \cite{Ooguri:1991ib}-\cite{BenGeloun:2011jnm}. The existence of phase transitions is at the heart of the commonly accepted scenario called ‘geometrogenesis', corresponding to the condensation of an arbitrarily large number of microscopic degrees of freedom, as for quantum fluids, expected to behave as a (semi-) classical space-time \cite{Carrozza:2014rba}-\cite{Lahoche:2019cxt}. Thus, investigations using renormalization group (RG)  techniques, besides the mathematical requirement to have a well-defined theory at the microscopic scale (Bogoljubow-Parasjuk-Hepp-Zimmermann BPHZ theorems), aims to describe such condensation phenomena or to say if such a transition exists and on what parameters of the model are existence depends.

\medskip

In the last decade, renormalization for TGFTs has been largely investigated; and has follows two complementary trajectories. Purely mathematical investigations through BPHZ theorems and constructive expansions \cite{Carrozza:2013wda}-\cite{Lahoche:2015ola} aiming to endow the framework with a solid mathematical background, and numerical investigations using the Wetterich non-perturbative RG formalism \cite{Carrozza:2014rba}-\cite{Lahoche:2019cxt}. Note that the last one does not only concern TGFT, but other discrete gravity approaches like matrix and tensor models have been investigated using non-perturbative techniques \cite{Brezin:1992yc}-\cite{Eichhorn:2017xhy} and references therein. The existence of Wilson-Fisher fixed points reminiscent of a second-order phase transition was a universal feature of all the most likely investigated models. Such a 'super' universality seemed to be a basic feature of the specific non-locality used to define TGFT's interactions, but on the other hand, did not provide any additional information likely to reduce the number of possible candidates \cite{Geloun:2016qyb}-\cite{Carrozza:2017vkz}; beyond, for example, properties like asymptotic freedom, which also seemed an almost universal characteristic \cite{Geloun:2016qyb}-\cite{Benedetti:2014qsa}. However, in a series of papers \cite{Lahoche:2018ggd}-\cite{Lahoche:2019cxt} see also
\cite{Lahoche:2019ocf}-\cite{Lahoche:2020pjo}, it has been shown that the current methodology used to solve the renormalization group equations could be improved in two ways. First, summing the most relevant sectors from the melonic one following a novel strategy called effective vertex expansion (EVE) \cite{Lahoche:2018oeo}. Second, taking into account Ward-Takahashi (WT) identities in the construction of the approximate solution of the RG equation. The EVE provides a non-trivial way to close the infinite hierarchy of RG equations, and it has been proved that it extends maximally the landscape in the vicinity of the Gaussian fixed point in contrast with the crude truncation method. Taking into account the constraint coming from the WT identities it has been proved that all the non-Gaussian fixed points violate this constraint and no fixed points are expected at all (particularly for the quartic model \cite{Lahoche:2018ggd}). From now the existence of a physical fixed point must require taking into account the natural constraints of the model given with the WT identities. In our previous investigation, we considered only models without Gauss closure constraint and we have already suspected that this closure constraint could make the Ward constraint obsolete, that is to say without effect on the Wetterich flow equations.
In this manuscript we intend to respond to this assertion i.e. we investigate the compatibility with WT identities and the FRG solutions for $U(1)$-models and provide the argument in favor of this Gauss constraint for emergent quantum gravity models. Let us recall two essential points. First, the closure constraint is an additional symmetry constraint that is necessary, for instance, to interpret the Feynman amplitudes of the tensor model as the amplitudes of a discretized simplicial manifold issued from topological BF theories \cite{Dijkgraaf:1989pz}-\cite{Putrov:2016qdo}.  Also will help for the emergence of a well-defined metric on the space after possible phase transition. Second, once again, due to the closure constraint the usual version of the EVE as proposed in \cite{Lahoche:2018ggd} need to be improved (the notion of the skeleton will be replaced by the topological skeleton) and we focus on the phase symmetry expansion.

\medskip

The paper is organized as follows: In section \eqref{sec2} we provide a useful ingredient concerning the TGFTs formalism, the definitions which will be important throughout the manuscript. We also give a brief introduction on the FRG using the Wetterich formalism. In section \eqref{secEVE} we study the FRG for our model. First, we establish the WT identities with new derivative techniques due to the appearance of the delta function in the propagator. We show that these identities are equivalent to the symmetric phase condition. The new version of the EVE is also given allows closing the hierarchical structure of the flow equations in the melonic sector. We also investigate the numerical solution of these flow equations and compare these with the one obtained with the usual truncation method. Section \eqref{sec4} is devoted to the conclusion and remarks. We give some appendices to elucidate the detailed computations using throughout the manuscript.

\section{Technical preliminaries}\label{sec2}

The GFTs are statistical field theories on the group manifolds rather than standard quantum field theories on euclidean space. For the quantum gravity, the standard choices are $\SU(2)$, $\SL(2, \mathbb{R})$, $\SL(2, \mathbb{C})$, $\SO(4)$ or $\SO(3,1)$, depending if we work in $3$ or $4$-dimensional gravity, in euclidean or minkowskian signature. We expect however that the universal features about RG flow depends essentially on dimensions and compactness of the group manifolds, and weakly on the Lie algebra structure. In the UV regime, where our investigations work, one expects moreover that compactness is an irrelevant parameter, and the essential ingredient is only the dimension of the Lie group, i.e. the number of Lie generators. Therefore, the behavior of the RG flow will be essentially the same for $\mathrm{G}=\SU(2)$ and $\mathrm{G}=(\U(1))^3$, replacing the non-Abelian group $\SU(2)$ by $3$ copies of the compact Abelian group $(U(1))^3$. For instance, the characteristics of the power counting are expected to be the same, which can be established rigorously, see \cite{Carrozza:2013wda}-\cite{Lahoche:2015ola}. Then, the renormalization properties can be proved rigorously, and in this paper, we will essentially focus on a just-renormalizable abelian GFT based on the group $\U(1)$. However, some investigations will also be carried out for non-abelian versions. We provide a short presentation of the (tensorial) GFT formalism in this section. We especially recall all the elementary definitions and statements useful for the rest of the paper, and we conclude with a presentation of the (FRG) formalism.

\subsection{Tensorial group field theory formalism}\label{sec21}
We consider a complex field $\varphi$, $\bar{\varphi}$ over $d$-copies of a compact Lie group $\mathrm{G}$,
\begin{equation}
\varphi, \bar{\varphi} : \mathrm{G}^{\times d}\rightarrow \mathbb{C}\,.
\end{equation}
The configurations of the complex field $\varphi$ are assumed to be randomly distributed, and the generating functional:
\begin{equation}
Z[\J, \bar{\J}]:= \int [\extd\varphi] [\extd\bar{\varphi}] \exp \left(-S[\varphi,\bar{\varphi}]+\int \extd\textbf{g} \bar{\J}(\textbf{g}) \varphi(\textbf{g})+ \int \extd\textbf{g} \bar{\varphi}(\textbf{g}) \J(\textbf{g})\right)\,,\label{partition1}
\end{equation}
where $\J, \bar{\J}: \mathrm{G}^d\to \mathbb{C}$ are complex source fields; $\textbf{g}:=(g_1,\cdots, g_d)\in \mathrm{G}^d$, and $\extd\textbf{g}:= \prod_{i=1}^d \, \extd g_i$, '$\extd g$' being the standard normalized Haar measure on the group manifold $\mathrm{G}$ ($\int dg=1$). In the quantum gravity context, the field $\varphi$ is endowed with a discrete $(d-1)$-simplex interpretation; and perturbative expansions are expect to reproduce spin-foam amplitudes, labeled with $2$-complex graphs dual to $d$-dimensional cellular decomposition. Because such a cellular decomposition is made of a set of glued $d$-simplices, action $S$ must have to encode:
\begin{itemize}
\item How to glue $(d-1)$-simplices to build $d$ simplices,
\item How to glue $d$-simplices.
\end{itemize}
These two requirements ensure that interactions in $S$ cannot be local in the usual sense. However, it has been showed that among the different prescriptions that we can imagine, there is one that looks better than the others called \textit{tensorial invariance}, especially in regard to properly construct a RG program. To briefly review it, note that there is a natural symmetry acting on the components $\varphi(\textbf{g})$ of the complex field. At the classical level, for $\varphi \in L^2(\mathrm{G}^d)$, it corresponds to unitary transformations $\mathcal{U}: L^2(\mathrm{G}^{\otimes d}) \to L^2(\mathrm{G}^{\otimes d})$, acting \textit{independently} on each group argument of fields:
\begin{equation}
\mathcal{U}[\varphi](\textbf{g}):= \int \extd\textbf{g}^\prime \left[\prod_{i=1}^d U^{(i)}(g_i,g_i^\prime) \right] \varphi(\textbf{g}^\prime)\,.
\end{equation}
A tensorial interaction $\mathcal{V}[\varphi,\bar{\varphi}]$ is invariant under such a transformation,
\begin{equation}\label{eq3}
\mathcal{V}[\mathcal{U}[\varphi],\mathcal{U}[\bar{\varphi}]]=\mathcal{V}[\varphi,\bar{\varphi}]\,.
\end{equation}

Moreover, any suitable interaction is assumed to admits an expansion in power of fields, so that $\mathcal{V}[\varphi,\bar{\varphi}]$ can be viewed as a sum of monomials. Tensorial invariance requires that each terms of the expansion must have the same number of fields $\varphi$ and $\bar{\varphi}$, say for instance $n$. To be unitary invariant, interaction has to be such that group arguments are contracted pairwise between fields $\varphi$ and $\bar{\varphi}$ with Dirac delta $\delta(g_i(\bar{g}_i)^{-1})$, the notation $\bar{g}_i$ referring to group arguments of the conjugate field $\bar{\varphi}$. As an example:
\begin{equation}
\mathcal{V}^{(4)}_{\text{melo}}[\varphi,\bar{\varphi}]= \int \extd\textbf{g}\extd\textbf{g}^\prime\, \varphi(g_1,g_2,g_3) \bar{\varphi}(g_1,g_2^\prime,g_3^\prime) \varphi(g_1^\prime, g_2^\prime, g_3^\prime) \bar{\varphi}(g_1^\prime, g_2, g_3)\,. \label{melonexample}
\end{equation}
Tensorial interactions can be conveniently pictured as $d$-colored bipartite regular graphs as follow:
\begin{enumerate}
\item We associate black and white nodes to the fields $\varphi$ and $\bar{\varphi}$ respectively,
\item We hook $d$ colored open edges to each node, materializing the $d$ arguments $g_1,\cdots, g_d$,
\item We hook the colored edges pairwise between black and white nodes accordingly to their colors.
\end{enumerate}
Figure \ref{fig1} provides some examples for $d=3$. This tensorial invariance requirement defines our choice of the \textit{theory space}. Note that a tensorial invariant can be connected or not following what its corresponding graph is. The second graph on Figure \ref{fig1} provides an example of a disconnected graph, and we have the following definition:
\begin{definition}
We call bubbles the connected tensorial invariants, whose corresponding graphs are made of a single connected component.
\end{definition}
This definition is especially important in regard to renormalization, because it allows to define a locality prescription \cite{Carrozza:2016tih}-\cite{Lahoche:2019cxt}:
\begin{definition}\label{locality}
For GFTs, connected bubbles are said to be locals. In the same way any functional $\mathcal{V}[\varphi,\bar{\varphi}]$ which expands in terms of bubbles is said to be a local functional.
\end{definition}
\begin{figure}
\begin{center}
\includegraphics[scale=1.2]{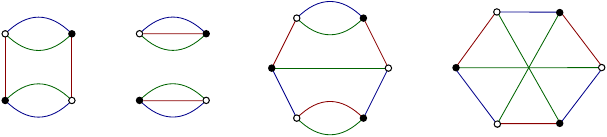}
\end{center}
\caption{Examples of tensorial interactions in dimension $d=3$. The first one on left corresponds to the formula \eqref{melonexample}.}\label{fig1}
\end{figure}
If the action $S$ expands only in terms of tensorial invariant, the corresponding GFT is simply tensorial. For renormalization investigations, however, it is suitable to break the global unitary invariance at the kinetic level, and we choose for the kinetic action:
\begin{equation}
S_{\text{kin}}[\phi, \bar{\phi}]=\int_{\mathrm{G}^{d}} \extd\textbf{\textbf{g}}\, \bar{\varphi}({\textbf{g}} )\left(-\Delta_\textbf{g}+m^2\right)\varphi({\textbf{g}} )\,, \label{Skin}
\end{equation}
where $\Delta_\textbf{g}$ is the Laplace-Beltrami operator over the group manifold $\mathrm{G}^{d}$. Except the kinetic contribution, all the remaining interactions (involving more than two fields) will taken to be unitary invariant. In the next section we provide some basics about Feynman graphs, power counting and renormalization. As we will recall, the most divergent sector say \textit{melonic} is built from a special class of Feynman graphs called \textit{melons}. These melons in turns have to made only with \textit{melonic bubbles}. We have the following definition:
\begin{definition}\label{defmelons}
Any melonic bubble $b_\ell$ of valence $\ell$ may be deduced from the elementary melon $b_1$:
\begin{equation}
b_1:=\vcenter{\hbox{\includegraphics[scale=1]{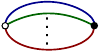} }}\,,
\end{equation}
replacing successively $\ell-1$ colored edges as follows, defining the $(d-1)$-dipole insertion operator $\mathfrak{R}_{i}$:
\begin{equation}
\vcenter{\hbox{\includegraphics[scale=1]{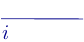} }}\underset{\mathfrak{R}_{i}}{\longrightarrow}\vcenter{\hbox{\includegraphics[scale=1]{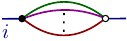} }}\,.
\end{equation}
In formula: $b_\ell := \left(\prod_{\alpha=1}^{\ell-1}\mathfrak{R}_{i_\alpha}\right) b_1$.
\end{definition}
The first and third bubbles on Figure \ref{fig1} are melons. For our nonperturbative investigations, we especially focus on a sub-sector of the melons, said \textit{non-branching}:
\begin{definition}\label{defnonbranch}
A non branching melonic bubble of valence $\ell$, $b_\ell^{(i)}$ is labeled with a single index $i\in\llbracket 1,6\rrbracket$, and defined such that:
\begin{equation}
b_\ell^{(i)}:= \left(\mathfrak{R}_{i}\right)^{\ell-1}\,b_1\,.
\end{equation}
\end{definition}
Figure \ref{fig2} provides the generic structure of melonic non-branching bubbles.

\begin{center}
\begin{equation*}
\vcenter{\hbox{\includegraphics[scale=1]{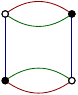} }} \,\underset{\mathfrak{R}_{i}}{\longrightarrow}\, \vcenter{\hbox{\includegraphics[scale=1]{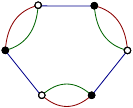} }}\,\cdots \underset{\mathfrak{R}_{i}}{\longrightarrow}\, \vcenter{\hbox{\includegraphics[scale=1]{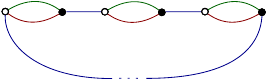} }}\underset{\mathfrak{R}_{i}}{\longrightarrow}\cdots
\end{equation*}
\captionof{figure}{Structure of the non-branching melons, from the smallest one $b_2$.} \label{fig2}
\end{center}
Finally, we focus in this paper on GFT endowed with a specific gauge invariance called \textit{closure constraint} or Gauss constraint. As we mention in the introduction, this constraint comes from LQG, and is expressed as
\begin{equation}
\varphi(g_1,\cdots, g_d) = \varphi(g_1h,\cdots, g_dh) \,,\quad \forall h\in \mathrm{G}^d\,.
\end{equation}
This condition has to be included in the definition of the partition function \eqref{partition1}, but we may carefully introduce this because the propagator corresponding to the kinetic action \eqref{Skin} is not invertible from projection into the gauge-invariant fields subspace. A common way to solve this difficulty is to define the (gauge-invariant) bare propagator $C(\textbf{g},\textbf{g}^\prime)$ as:
\begin{equation}
C(\textbf{g},\textbf{g}^\prime)= \left(\hat{P} \frac{1}{-\Delta_\textbf{g}+m^2} \hat{P}\right)(\textbf{g},\textbf{g}^\prime)\,,\label{propaprojection}
\end{equation}
where the projector $P$ is defined as:
\begin{equation}
\hat{P}(\textbf{g},\textbf{g}^\prime):= \int \extd h \prod_{i=1}^d \delta(g_i^{\prime} h g_i^{-1})\,,
\end{equation}
which becomes important in the definition of the Feynman amplitude from Wick theorem. It is formally equivalent to write the partition function \eqref{partition1} as:
\begin{equation}
Z[\J, \bar{\J}]= e^{\int \extd \textbf{g} \extd\textbf{g}^\prime \frac{\delta }{\delta \varphi(\textbf{g})} C(\textbf{g},\textbf{g}^\prime) \frac{\delta }{\delta \bar{\varphi}(\textbf{g}^\prime)}}\, \exp \left(-S_{\text{int}}[\varphi,\bar{\varphi}]+\int \extd\textbf{g} \bar{\J}(\textbf{g}) \varphi(\textbf{g})+ \int \extd\textbf{g} \bar{\varphi}(\textbf{g}) \J(\textbf{g})\right)\bigg\vert_{\varphi=\bar{\varphi}=0}\,,\label{partition2}
\end{equation}
where $S_{\text{int}}$ contain only monomials of valence higher than $2$. This relation may be simply obtained and for more detail see \cite{Lahoche:2015ola}. Now we provide the notion of Feynman graphs and the power counting of our model.

\subsection{Feynman diagrams and power counting}

The partition function \eqref{partition2} as well as correlations functions expand perturbatively in power series involving Feynman amplitudes $\mathcal{A}_{\mathcal{G}}$ generally are divergent. The Feynman graphs $\mathcal{G}$ indexing the amplitudes of the perturbation theory are $2$-simplexes, i.e. a set of vertices, edges and faces: $\mathcal{G}:=(\mathcal{V}, \mathcal{L}\cup\mathcal{L}_{\text{ext}},\mathcal{F}\cup\mathcal{F}_{\text{ext}})$. We denote as $V$, $L$, $L_{\text{ext}}$, $F$ and $F_{\text{ext}}$ respectively the number of vertices, internal edges, external edges, internal faces and external faces. Note that we attribute the color zero for the propagator edges. Moreover, the edges and faces are of two types, internals and externals, the last ones being labeled with an index ‘‘$\text{ext}$''. We recall that the faces are defined as follow:
\begin{definition}
A face is a maximal bi-colored subset of edges, including necessarily the color zero attributing to the Wick contractions. The set builds a cycle, which can be closed or open, respectively for closed and open faces. The set of zero-colored edges along the cycle define the boundary $\partial f$ of the face $f$, and its length is equal to the number of internal zero colored edges along with it.
\end{definition}
We moreover recall that the set of nodes hooked with an external edge are said \textit{external nodes}, and the vertices hooked to them are \textit{external vertices}. Figure \eqref{Feynman1} provides an example of a Feynman diagram, where propagators are materialized with dotted edges. A specificity of tensorial group field theory is that they are power countable. The most divergent diagrams define the melonic sector, and melonic diagrams can be defined recursively as for melonic bubbles (definition \ref{defmelons}), including edges of color zero in the recursion. They moreover satisfy the following lemma:
\begin{lemma}
Let $\mathcal{G}$ be a melonic 1PI $2N$-points diagram with more than one vertex. The external black and white nodes allows to build $N(d-1)$-dipoles. There are moreover $N$ external faces of the same color, whose boundaries connect to pairwise external nodes of different external $(d-1)$-dipoles. \label{propmelonfaces}
\end{lemma}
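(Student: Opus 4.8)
The plan is to argue by induction on the number of vertices $V$ of the melonic $2N$-point diagram $\mathcal{G}$, following the recursive generation of melons by the $(d-1)$-dipole insertion operator $\mathfrak{R}_{i}$ used in Definition \ref{defmelons}, now extended to include color-zero edges in the recursion. The convenient reformulation I would adopt is that a melonic $2N$-point 1PI graph carries a \emph{distinguished color} $c$: the claim then becomes that the $2N$ external nodes organize into $N$ external $(d-1)$-dipoles, each a black/white pair joined by the $d-1$ colored strands $\{1,\dots,d\}\setminus\{c\}$, and that the $N$ external faces all carry this single color $c$ and run between nodes of distinct dipoles. Making the distinguished color explicit is what turns the statement into an invariant that can be propagated through the recursion.

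For the base case I would take the smallest melonic 1PI diagram with $V=2$, obtained by gluing two interaction bubbles with color-zero edges, and verify the two assertions by direct inspection: the two external nodes form a single external $(d-1)$-dipole and there is exactly one external face, which fixes $N=1$, and the analogous multi-external picture is read off the explicit two-vertex melon. For the inductive step, I would use the fact — inherited from the recursive definition, and from the standard characterization of melons as the graphs reducible to an elementary melon by successive dipole contractions — that any melonic $\mathcal{G}$ with $V>1$ contains a contractible $(d-1)$-dipole whose contraction produces a strictly smaller melonic $2N$-point 1PI diagram $\mathcal{G}'$ with the same external legs. Applying the induction hypothesis to $\mathcal{G}'$ then furnishes its $N$ external dipoles together with its $N$ external faces of a common color $c$.

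It remains to reinsert the contracted dipole, that is, to act with $\mathfrak{R}_{i}$ on an edge of $\mathcal{G}'$, and to track the effect on the external structure. Here I would separate insertions on internal edges — which leave the external data untouched, so that $N$, the external dipole count, and the common color are unchanged — from insertions that touch an external leg, where one must check that the newly created pair of nodes rearranges into the external dipoles in the prescribed way, that the color carried by the external faces is preserved, and that exactly the right number of faces open or close. The central computation is the bookkeeping for faces: each $\mathfrak{R}_{i}$ insertion replaces one strand by a short melonic cell, splitting or lengthening faces, and I would show it preserves both the count $N$ of open faces of color $c$ and the property that each such face connects nodes of two different dipoles.

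The hard part will be precisely this face bookkeeping, and in particular the \emph{single-color} invariant: a priori a dipole insertion of color $i\neq c$ could appear to create external faces of a new color, and one must show that melonicity (equivalently, the vanishing of the relevant jacket degree defect) forbids this, so that all external faces keep the common color $c$ throughout the recursion. A secondary subtlety is to confirm that dipole contraction never breaks one-particle-irreducibility and never merges two external dipoles into one, which is what secures the ``different dipoles'' clause; I expect both to follow from the melonic constraint, but they are the steps that will require care, in contrast with the routine insertions on internal edges.
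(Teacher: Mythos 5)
Your overall strategy --- induction on the number of vertices through the recursive dipole-insertion structure of melons, carried out directly in the original representation --- is genuinely different from the paper's proof, which passes to the intermediate field representation (Appendix A): there melonic vacuum graphs are characterized as trees, the $2N$-point graphs are produced by deleting $N$ tadpole (dotted) edges, and the degree bookkeeping shows that each deletion after the first preserves the leading order \emph{only if} the deleted tadpole lies on the boundary of an already-opened face (costing $d-1$ faces instead of $d$), which is precisely what forces the $N$ external faces to share a single color and to chain through distinct $(d-1)$-dipoles. Measured against either route, however, your proposal has gaps that are not merely technical.

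First, the claim that drives your induction --- that every melonic $\mathcal{G}$ with $V>1$ contains a contractible $(d-1)$-dipole whose contraction yields a smaller melonic $2N$-point 1PI graph ``with the same external legs'' --- fails on the simplest melonic family: the ladders, i.e.\ chains of quartic vertices of a single color with no self-energy insertions, which are the leading-order $4$-point graphs at every order. In a ladder every $(d-1)$-dipole lies inside a vertex; contracting an internal one reconnects the two color-$i$ half-edges into a new edge (leaving a degenerate two-node bubble carrying all $d$ colors, i.e.\ an identity insertion) and reconnects the two color-$0$ half-edges into a tadpole, and chasing this through one finds that the graph either disconnects or collapses to a $2$-point graph with the two external legs of a dipole sewn together; in no case does one obtain a smaller melonic 1PI $4$-point graph with the same external legs. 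The reduction that actually works for ladders is amputation of an external vertex (removal of a leaf in the intermediate-field picture), and that move changes which nodes are external, so your invariant must instead be formulated so that the external dipoles and faces can be transported across the amputation. Second, the step you explicitly defer --- that melonicity forbids insertions from creating external faces of a new color --- is not a side check but the entire content of the lemma, and closing it requires exactly the degree computation the paper performs: for $d=6$, attaching a new vertex of color $j$ across an external dipole whose faces carry color $c$ gives $\delta\omega=-2\cdot 2+\delta F-\delta R$ with $\delta F=5$, $\delta R=1$ (hence $\delta\omega=0$, leading order preserved) when $j=c$, but only $\delta F=4$ (hence $\delta\omega=-1$, subleading) when $j\neq c$. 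Until this counting is supplied for each of your insertion and amputation moves, the induction does not close, and the proposal remains a plan rather than a proof.
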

\begin{figure}
\begin{center}
\includegraphics[scale=1.2]{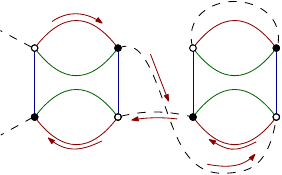}
\end{center}
\caption{A typical Feynman graph with two vertices, two external edges, two external nodes but one external vertex. The red arrows follow the boundary of a external face of length $2$.}\label{Feynman1}
\end{figure}
A proof of this lemma is given in Appendix \ref{App1} for the quartic melonic model. Before move on the power counting and specify the relevant models for this paper, we add two important definitions, the \textit{sum} of two bubbles and the \textit{boundary} of a Feynman graph:
\begin{definition}
Let $b_2^{(i)}$ and $b_2^{(j)}$ two bubbles. Let $n\in b_2^{(i)}$ and $\bar{n}\in b_2^{(j)}$ two white and black nodes and $e$ a dotted edge joining together with these two nodes. The connected sum $b_2^{(i)}\sharp_{n\bar{n}}b_2^{(j)}$ is defined as the bubble obtained from the two following successive moves:
\begin{enumerate}
\item Deleting the edge $e$
\item Connecting together the colored edges hooked to $n$ and $\bar{n}$ following their respective colors
\end{enumerate}
\end{definition}\label{defsum}
\begin{definition}
Let $\mathcal{G}$ be a connected Feynman graph with amputated of its external edges. Let $\partial n$ the set of external nodes (i.e. nodes hooked to external edges), $F^\prime=\{f^\prime\}$ the set of external faces and $\partial f^\prime$ the boundary of $f^\prime$. The boundary graph $\partial \mathcal{G}$ of $\mathcal{G}$ is the $d$-colored bipartite regular graph build as follows:
\begin{enumerate}
\item set $(n,\bar{n})\in \partial n$ two boundary nodes respectively black and white.
\item set $\partial_{n\bar{n}} F^\prime\subset \partial F^\prime$ the subset of boundaries having the couple $(n,\bar{n})$ as boundaries.
\item For each path $\partial f^\prime\in \partial_{n\bar{n}} F^\prime$ we create an edge of color $c(f^\prime)$ between $n$ and $\bar{n}$.
\item and so one for each pair $(n,\bar{n})\in \partial n$ .
\end{enumerate}
\end{definition}\label{defboundary}
Divergences can occur and require renormalization to well define the theory. For a compact Lie group, the divergent degree $\omega(\mathcal{G})$ for the graph $\mathcal{G}$ is uniformly bounded by the power counting \cite{Carrozza:2016tih}-\cite{Lahoche:2019cxt}.
\begin{theorem}
The Abelian power counting $\omega(\mathcal{G})$ for the graph $\mathcal{G}$ is given by:
\begin{equation}
\omega(\mathcal{G})=-2L(\mathcal{G})+D(F(\mathcal{G})-R(\mathcal{G}))\,,
\end{equation}
where $D$ is the dimension of the group $\mathrm{G}$, and $R$ the rank of the adjacency matrix $\epsilon_{ef}$, such that $\epsilon_{ef}=\pm 1$ for $e\in \partial f$ and $0$ otherwise; the sign depending on the relative orientation of the face\footnote{Because the graph of a complex model is orientable, we can fix the sign to be $1$ everywhere.}.
\end{theorem}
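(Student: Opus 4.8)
The plan is to evaluate the Feynman amplitude $\mathcal{A}_{\mathcal{G}}$ directly in the Fourier (momentum) representation and to extract its ultraviolet scaling. First I would Schwinger-parametrize each propagator, writing the gauge-projected kernel \eqref{propaprojection} as $C=\int_0^\infty \extd\alpha\, \hat P\, e^{-\alpha(-\Delta_{\mathbf g}+m^2)}\,\hat P$; using that $\hat P$ is idempotent and commutes with the heat kernel, each internal edge $e$ then carries a single Schwinger parameter $\alpha_e$ together with a single group holonomy $h_e$ integrated over $\mathrm{G}$. For the abelian group the heat kernel diagonalizes in the Fourier basis: each field argument is labelled by a momentum in the $D$-dimensional dual lattice, the Laplacian eigenvalue is $\|p\|^2$, and the vertex contractions identify momenta pairwise so that each internal face $f$ carries a single momentum $p_f$ and contributes the Gaussian weight $e^{-\alpha(f)\|p_f\|^2}$ with $\alpha(f)=\sum_{e\in\partial f}\alpha_e$.

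Next I would isolate the effect of the closure constraint. Expanding the holonomies in the same Fourier basis, the holonomy $h_e$ attached to an edge couples to the face momenta precisely through the incidence data $\epsilon_{ef}$, producing a phase $\exp\big(i\sum_e \theta_e\cdot\sum_f \epsilon_{ef}\,p_f\big)$ (using orientability to fix the signs as in the footnote). Integrating each $h_e$ over $\mathrm{G}$ then yields exactly one Kronecker constraint per edge, $\sum_f \epsilon_{ef}\,p_f=0$. The solution space of this linear system has dimension $F-R$, where $R=\mathrm{rank}(\epsilon_{ef})$ is the rank of the edge–face adjacency matrix; hence after integrating out the gauge holonomies the amplitude reduces to a sum over $F-R$ independent momenta, each valued in the $D$-dimensional dual lattice.

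The final step is the scaling estimate. I would rescale all Schwinger parameters by the ultraviolet cutoff, $\alpha_e=\Lambda^{-2}x_e$, so that the measure produces $\Lambda^{-2L}$, while each of the $F-R$ independent momentum sums, being of the form $\sum_{p\in\mathbb{Z}^D}e^{-\alpha\|p\|^2}\sim(\pi/\alpha)^{D/2}$, produces a factor $\Lambda^{D}$. Collecting these contributions gives the leading behaviour $\Lambda^{-2L+D(F-R)}$ times a convergent integral over the rescaled parameters, which identifies the superficial degree of divergence $\omega(\mathcal{G})=-2L(\mathcal{G})+D\big(F(\mathcal{G})-R(\mathcal{G})\big)$. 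To promote this to the uniform bound claimed, I would replace the naive scaling by the standard estimate $\sum_{p}e^{-\alpha\|p\|^2}\le K\max(1,\alpha^{-1/2})^{D}$ and organize the $\alpha$-integration through a multiscale decomposition, so that the same exponent controls the amplitude for every assignment of scales.

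I expect the main obstacle to be the second step: establishing rigorously that integrating the edge holonomies produces exactly the constraints encoded by $\epsilon_{ef}$, and that the number of surviving independent momenta is $F-R$ rather than some finer graph-dependent quantity. This requires a careful matching between the combinatorics of the faces (their boundaries $\partial f$ and orientations) and the linear algebra of the incidence matrix, and it is precisely here that the closure constraint genuinely modifies the naive tensorial power counting, lowering it by $DR$.
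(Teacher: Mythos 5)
Your proposal is correct and takes essentially the same route as the proof this paper relies on: the paper does not prove the theorem in the text but defers to the cited literature (see also the lemma in Appendix \ref{App1}, whose proof is attributed to \cite{Lahoche:2015ola} ``using multi-scale decomposition''), and that reference proof proceeds exactly as you outline — Schwinger/heat-kernel parametrization of the gauge-projected propagator, integration of one holonomy per edge yielding the linear constraints $\sum_f \epsilon_{ef}\,p_f=0$ whose solution space has dimension $F-R$, and a multiscale organization of the parametric integrals to make the bound $\Lambda^{-2L+D(F-R)}$ uniform over scale assignments. You have also correctly identified the genuinely delicate point, namely that the closure constraint reduces the count of independent face momenta by the rank $R$ of the incidence matrix, which is precisely how the Gauss constraint improves the unconstrained tensorial power counting by $DR$.
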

Note that the rank $R$ comes from the Gauss constraint, which improves the power counting for an unconstrained model. We have the following theorem (see \cite{Samary:2012bw} for more detail).
\begin{theorem}
The Abelian quartic melonic model with closure constraint in dimension $d=6$ is just renormalizable.
\end{theorem}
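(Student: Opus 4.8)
The plan is to read off the divergence degree of the dominant (melonic) graphs directly from the Abelian power counting $\omega(\mathcal{G})=-2L(\mathcal{G})+D(F(\mathcal{G})-R(\mathcal{G}))$ stated above, specialized to $D=1$ for $\mathrm{G}=\U(1)$ and $d=6$, and to show that the resulting degree depends only on the number of external legs, with the quartic sector exactly marginal. First I would argue that it suffices to treat the melonic sector: among all 1PI graphs with fixed external data, the melonic ones (built recursively as in Definition \ref{defmelons}, now including the color-zero edges in the recursion) maximize $F-R$ and hence $\omega$, so that every non-melonic graph is strictly less divergent and does not alter the classification of primitively divergent amplitudes.

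Next I would compute the optimal combinatorial quantities for a connected melonic graph. Using the recursive $(d-1)$-dipole structure $\mathfrak{R}_i$ and Lemma \ref{propmelonfaces} for the external faces, I would determine the maximal internal face number $F$ and, crucially, the rank $R$ of the incidence matrix $\epsilon_{ef}$. The closure constraint inserts one group integration, hence one independent $\delta$-constraint, per propagator line, and the heart of the argument is to show that for leading melonic graphs these constraints reduce the number of free face holonomies by exactly the loop number, i.e. $R=L-V+1$. Together with the maximal face count this gives $F-R=(d-2)(L-V+1)+c(N)$, where $c(N)$ is a boundary term fixed by the external structure of Lemma \ref{propmelonfaces}.

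I would then substitute the quartic bookkeeping identity $4V=2L+2N$, equivalently $L=2V-N$, together with the loop number $L-V+1=V-N+1$, into $\omega=-2L+(F-R)$ and collect terms:
\begin{equation}
\omega=(d-6)\,V+(4-d)\,N+(d-2)+c(N)\,.
\end{equation}
The coefficient of $V$ is proportional to $(d-6)$ and therefore vanishes precisely at $d=6$, where $\omega=\omega(N)=4-2N$ up to the boundary constant, independent of the number of vertices. Reading off the classification for the $2N$-point functions: $N=1$ (two-point) is relevant, generating the mass and, after a Taylor expansion of the self-energy, the wave-function renormalization; $N=2$ (four-point) is marginal, corresponding to the finitely many quartic melonic bubbles $b_2^{(i)}$ indexed by $i\in\llbracket 1,6\rrbracket$; and $N\geq 3$ is irrelevant. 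The finiteness of the set of divergent amplitudes together with the existence of a marginal coupling is exactly just-renormalizability.

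The main obstacle I expect is the second step, namely the exact evaluation of $R$ under the Gauss constraint: showing that the $L$ gauge $\delta$-functions produce exactly $L-V+1$ independent linear relations among the face boundary holonomies for melonic graphs, and strictly fewer independent divergent faces otherwise. This requires careful control of the linear system defined by $\epsilon_{ef}$ along the melonic faces, and it is precisely where the improvement of the power counting by the closure constraint, which shifts the just-renormalizable rank from the ungauged value up to $d=6$, is quantitatively pinned down.
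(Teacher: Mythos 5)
Your proposal reproduces the paper's endgame correctly: the Abelian power counting $\omega=-2L+D(F-R)$ with $D=1$, the quartic relation $2L=4V-N_{\text{ext}}$, and the input $F-R=(d-2)(L-V+1)$ do combine to give $\omega=4-N_{\text{ext}}$ at $d=6$ (the paper's equation \eqref{melocountinplus}), whence the classification: two-point relevant, four-point marginal, everything else power-counting convergent. But the proposal is a plan rather than a proof, because the two facts that carry all the weight are asserted, not established: (i) melonic dominance, i.e. that at fixed vertex number and external data the melonic graphs maximize $F-R$, so that every non-melonic graph obeys the same bound with strict inequality; and (ii) the exact evaluation of the melonic combinatorics, $R=L-V+1$ together with the maximal face count, giving $F-R=(d-2)(L-V+1)$ for non-vacuum melonic graphs --- equivalently, that your boundary term $c(N)$ vanishes, without which the marginality of the quartic sector at $d=6$ is not actually pinned down. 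You explicitly name (ii) as ``the main obstacle'', and (i) is in the same situation: a recursive description of the melonic family (Definition \ref{defmelons}) is not by itself an optimization statement over all graphs.

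These two points are exactly what the paper proves in Appendix \ref{App1} (Theorem \ref{ThMelon}), and it does so with a device your outline does not have: the intermediate field ($\Theta$) representation. There the closure constraint forces the colored propagators to be diagonal, melonic graphs become trees of loop vertices, and both missing steps reduce to finite checks. Dominance is proved by induction on the number of colored edges: one lists the four elementary moves that grow a graph, computes $(\delta L,\delta F,\delta R)$ for each, and observes that only the tree-preserving moves have $\delta\omega=0$ while all others strictly decrease $\omega$. The evaluation of $F-R$ is then carried out by optimal tadpole deletions (to create external edges, the first costing $d$ faces and one unit of rank, the following ones $d-1$ faces and one unit of rank) and successive $(d-1)$-dipole contractions along a spanning tree, yielding $F-R=(d-2)(L-V+1)$ for $2N$-point graphs and hence $\omega=4-N_{\text{ext}}$, i.e. $c(N)=0$. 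If you want to complete your route in the direct representation, you must supply an equivalent mechanism for controlling the linear system $\epsilon_{ef}$ --- for instance a spanning-tree gauge fixing showing $R\le L-V+1$ for every connected graph, with equality plus maximal face number characterizing the melons --- which amounts to rebuilding the content of the paper's induction; as it stands, the proposal identifies where the difficulty sits but does not resolve it.
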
\label{th1}

The model we consider in this note is the $\varphi^4$ model in dimension $d=6$ i.e. defined in the $6$-copy of the group $\U(1)$. Then the field can be decomposed along the Fourier modes $\{e^{i\sum_{j=1}^d p_j \theta_j}\}$, for $\theta_j\in [0,2\pi[$ and $p_j\in \mathbb{Z}$:
\begin{equation}
\varphi(\textbf{g})= \sum_{\textbf{p}\in \mathbb{Z}^6}\, T_{\textbf{p}} \, e^{i \textbf{p}\cdot \mathbf{\theta}}\,, \qquad \bar{\varphi}(\textbf{g})= \sum_{\textbf{p}\in \mathbb{Z}^6}\, \bar{T}_{\textbf{p}} \, e^{-i \textbf{p}\cdot \mathbf{\theta}}\,,
\end{equation}
where $\textbf{p}=(p_1,\cdots,p_d)$, $\mathbf{\theta}=(\theta_1,\cdots,\theta_d)$ and $\textbf{p}\cdot \mathbf{\theta}:=\sum_j p_j\theta_j$. It is suitable to work with Fourier components $T_{\textbf{p}}$ (resp. $\bar{T}_{\textbf{p}}$) rather than original fields $\varphi$ (resp. $\bar{\varphi}$). Under these considerations, the bare propagator reads as a diagonal matrix $C(\textbf{p},\textbf{p}^\prime)=\delta_{\textbf{p},\textbf{p}^\prime} C(\textbf{p})$, with
\begin{equation}
C(\textbf{p})=\frac{\delta\left(\sum_j p_j \right)}{Z_{-\infty}\textbf{p}^2+Z_2m^2_0},
\end{equation}
the delta ensuring closure constraint; and the interaction part of the action being:
\begin{equation}
S_{\text{int}}[T,\bar{T}]=Z_4\lambda_{4}\,\sum_{i=1}^6 \vcenter{\hbox{\includegraphics[scale=0.8]{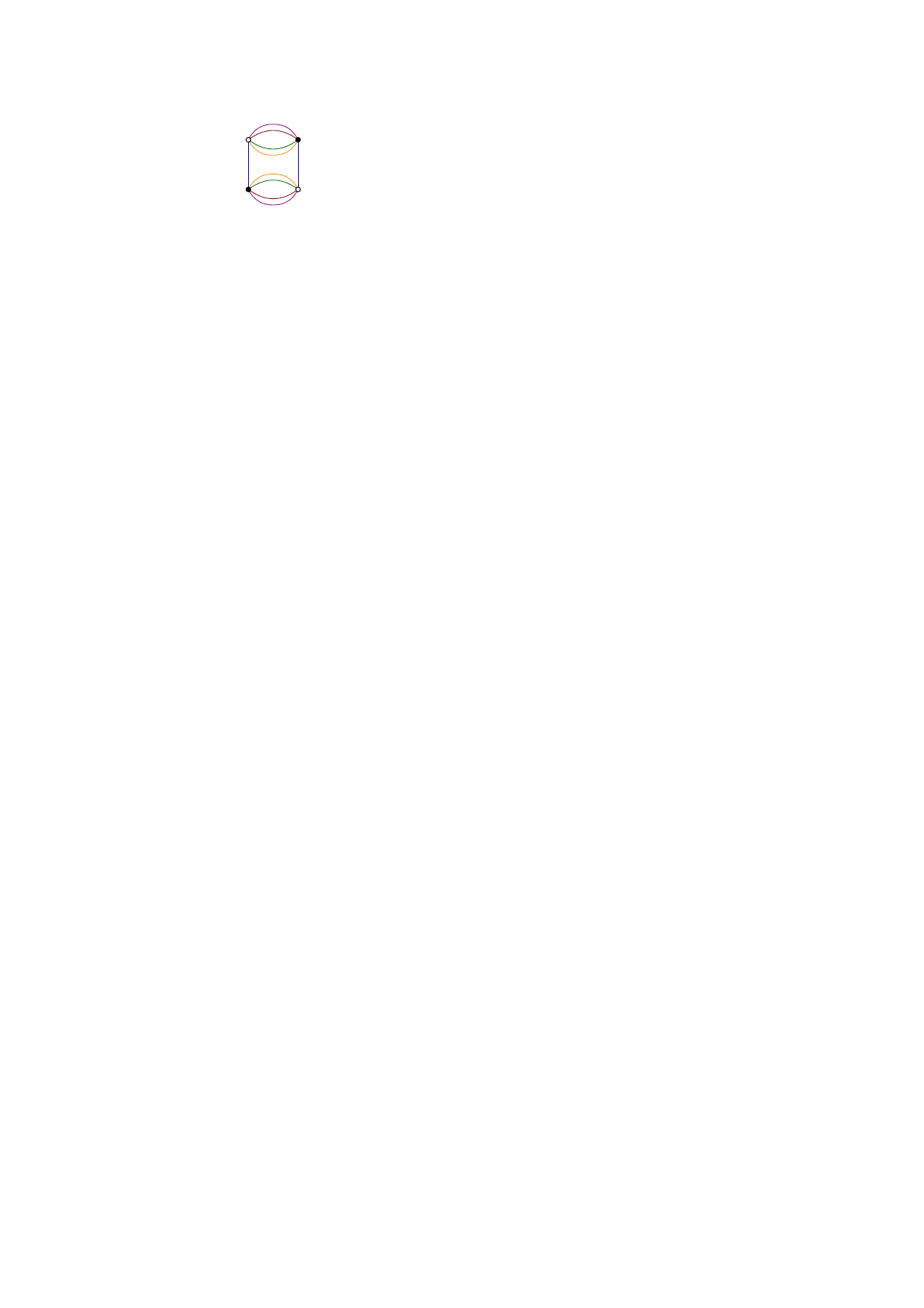} }}\,.\label{classicAction2}
\end{equation}
In these equations, $Z_{-\infty}$, $Z_2$, and $Z_4$ denote respectively the wave function, mass and couplings counter-terms. The renormalizability theorem \ref{th1} states that there exists a (not unique) choice of such a counter-terms which cancel all the high-momenta divergences arising in the perturbative expansion, up to a suitable regularization of large momenta sums. Note also that the index “$-\infty$" refers to the fact that the finite part of the counter-terms is fixed in the deep infrared limit through some \textit{renormalization conditions} and we provide this explicitly in section \ref{secEVE}. Besides this theorem provides no indications on the behavior of the RG flow and the underlying effective physics. This is why we move on to the nonperturbative technique as the Wetterich formalism.
\medskip

\subsection{Functional renormalization group for TGFTs}\label{sec2.3}
The renormalization group flow describes evolution of couplings through the effective action when UV degrees of freedom are integrated out. The FRG formalism is the most powerful tool to address nonperturbative issues. The formalism is aiming to construct a smooth interpolation, depending on a unique scale parameter $k$, between a microscopic action in the deep UV ($k=\Lambda$) to an effective action $\Gamma$ in the deep IR ($k=0$). For a just-renormalizable models, the microscopic scale $\Lambda$ becomes irrelevant, and may be removed in the continuum limit $\Lambda\to\infty$. For $k\in]0,\Lambda[$, between boundaries, we deal with an effective object $\Gamma_k$, which is nothing but the effective action at scale $k$ for the integrated-out degrees of freedom. It is defined as a slightly modified Legendre transform:
\begin{equation}
\Gamma_k[M,\bar{M}]+R_k[M,\bar{M}]=\sum_\textbf{p}\bar{\J}_\textbf{p} M_\textbf{p}+\sum_\textbf{p}\bar{M}_\textbf{p} \J_\textbf{p}-W_k[\J,\bar{\J}]\,,\label{average}
\end{equation}
where we introduced:
\medskip

\noindent
$\bullet$ $W_{k}[\J,\bar{\J}]$ as the free energy of the integrated degrees of freedom; formally given by the logarithm of the partition function:
\begin{equation}
Z_k[\J, \bar{\J}]= e^{\sum_{\textbf{p}} \, \frac{\partial}{\partial T_\textbf{p}} C(\textbf{p})\frac{\partial}{\partial \bar{T}_\textbf{p}}}\, \exp \bigg(-S_{\text{int}}[T,\bar{T}]-R_k[T,\bar{T}]+\sum_\textbf{p}\bar{\mathrm{J}}_\textbf{p}T_\textbf{p}+\sum_\textbf{p} \bar{T}_\textbf{p}{\mathrm{J}}_\textbf{p}\bigg)\bigg\vert_{T=\bar{T}=0}\,.\label{partitionWett}
\end{equation}
\medskip

\noindent
$\bullet$ The classical fields $M$ and $\bar{M}$ are themselves tensor fields, defined as:
\begin{equation}
M_{\textbf{p}}:= \frac{\partial W_k}{\partial \bar{\J}_{\textbf{p}}}\,,\qquad \bar{M}_{\textbf{p}}:= \frac{\partial W_k}{\partial {\J}_{\textbf{p}}}\,.\label{M}
\end{equation}
\medskip

\noindent
$\bullet$ The $R_k$-term in the definition \eqref{average} adds like a scale dependent mass, which decouple the low energy modes ($\textbf{p}^2<k^2$) from long distance physics whereas large energy modes remains less massive. Physically, the regulator ensures that the following boundary conditions hold:
\begin{equation}
\Gamma_{k>\Lambda}\to S\,,\qquad \Gamma_{k=0}=\Gamma\,.\label{boundary}
\end{equation}
Explicitly, $R_k$ is given in the Fourier modes as:
\begin{equation}
R_k[T,\bar{T}]:=\sum_\textbf{p} \bar{T}_\textbf{p} \,r_k(\textbf{p}^2){T}_\textbf{p}\,.
\end{equation}
The function $r_k(\textbf{p}^2)$ being positive defined, and having to satisfy some requirement in regard to the boundary conditions \eqref{boundary}, among them:
\begin{enumerate}
\item $\lim_{k\to\Lambda}r_k(\textbf{p}^2)\gg 1\,,$
\item $\lim_{k\to0}r_k(\textbf{p}^2) =0\,,$
\item $r_k(\textbf{p}^2>k^2)\simeq 0\,.$
\end{enumerate}
Because of the definition \eqref{average}, the effective $2$-point function $G_k$ have to be related with the second derivative of the classical action $\Gamma^{(2)}_k:=\partial_M\partial_{\bar{M}}\Gamma_k$ as:
\begin{equation}
G_k(\textbf{p},\textbf{p}^\prime):=\left(\Gamma^{(2)}_k+r_k\mathbb{I}\right)^{-1}(\textbf{p},\textbf{p}^\prime)\,,\label{twopoint}
\end{equation}
The effective average action moves through the theory space with the running scale $k$, and its evolution obeys the exact flow equations known as  \textit{Wetterich-Morris equation} \cite{Wetterich:1991be}-\cite{Morris:2000hm}:
\begin{equation}
\dot{\Gamma}_k=\sum_{\textbf{p}\in\mathbb{Z}^6} \dot{r}_k(\textbf{p}^2)\left(\Gamma^{(2)}_k+r_k\mathbb{I}\right)^{-1}(\textbf{p},\textbf{p})\,,\label{Wetterich}
\end{equation}
where the dot designates the derivative with respect to the normal coordinate along with the flow: $s:=\ln k/\Lambda$. Despite its apparent simplicity,  equation \eqref{Wetterich} works in an infinite-dimensional functional space and is hard to solve even for simple models. We generally use approximations for any nonperturbative information about the RG flow. Note that the Abelian model that we consider has been already studied in \cite{Lahoche:2016xiq}, using the crude truncations methods, which consist to imposes $\Gamma^{(2n)}_k=0$ for $n$ large enough. The conclusion in the large $k$ regime was the existence of a non-perturbative Wilson-Fisher fixed point, reminiscent of a second-order phase transition scenario. In this paper, we construct a new version of the EVE applicable to the model defined with the closure constraint. Note that the EVE technique is developed through a recent series of papers \cite{Lahoche:2018ggd}-\cite{Lahoche:2019cxt}, allowing to truncate 'smoothly' in the theory space around the marginal sector; and to capture the full melonic sector through a closing technique. This will be the topic of the Section \ref{secEVE} below.
\medskip

To close this section, let us add a remark about the notion of \textit{scaling dimension}. It is well known in standard quantum field theories (QFTs), that dimensions of the interactions are closely related to their power-counting renormalizability. The interactions with positive or vanishing (momentum) dimensions are renormalizable, while the ones with negative dimensions are non-renormalizable. For GFTs, however, there are no references scale \textit{a priori} in the classical action \eqref{classicAction2}, and the sums over $\mathbb{Z}^d$ are dimensionless. There are two ways to introduce a dimension in the QFT framework and recover the standard classification following the dimension of the couplings. The first one is to make contact with physical quantities. In particular, for non-Abelian models over $\SU(2)$, the spectrum of the Laplacian is as well the spectrum of the area operator of the LQG, providing it with a dimension $2$. The second strategy is to define the dimension from the behavior of the renormalization group flow itself. Indeed, for a just-renormalizable model, one expects the existence of a leading logarithm behavior to each order of the perturbative expansion. Thus, because they scale as $\Lambda^0$ with respect to some UV cut-off $\Lambda$, we attribute to them a scaling dimension zero. Next, it can be checked that any leading order $N$-points function made only with $4$-points melonic bubbles have to scale as $\Lambda^\omega$ (see \cite{Lahoche:2015ola} for more detail), with
\begin{equation}
\omega=4-N\, \label{powercountingTrue}
\end{equation}
This, for instance, defines the scaling dimension of the mass term, the corresponding quantum corrections coming from $N=2$ diagrams, the dimension of $m^2$ must be $2$. Moreover, all the couplings involving more than $4$ fields have a negative dimension, which is in agreement with our interpretation about a just-renormalizable theory (for an extended discussion, the reader may consult \cite{Lahoche:2018oeo}). \\

\section{Melonic RG solution}\label{secEVE}
The difficulty to solve the RG equation \eqref{Wetterich} can be recognized from the observation that; taking successive derivatives with respect to $M$ and $\bar{M}$; we can obtain the flow equations for $\Gamma^{(n)}_k$ in terms of $\Gamma^{(n+2)}_k$, $\Gamma^{(n)}_k$ and so on. Then, the solution of the flow equation \eqref{Wetterich} requires in principle to solve an infinite hierarchy; with an infinite number of initial conditions. Then some approximations are required to extract any nonperturbative information on the true behavior of the RG flow. An approximate solution can be constructed by projection into a suitable subspace. Generally, this subspace is spanned with a finite number of couplings; and focus on a finite number of interactions, and thus to impose $\Gamma^{(n)}_k=0$ for $n$ large enough. The originality of the EVE approach is to truncate around a sector, without cutting on $n$; but spanned with relevant and marginal couplings only. Moreover in this approach, effective vertices are no longer cut around zero momenta, and the full momentum dependence is kept. Using this non-trivial breakthrough, the singularity line arising from the computation of the anomalous dimension within a crude truncation disappears. The price to pay is that our investigations have to focus on the intermediate regime $\Lambda \gg k\gg 1$ called UV domain; far from UV regime to ensure the large $N$-limit; but far enough from the deep UV so that we have left the transient regime on which the flow is well parameterized by the relevant and marginal interactions through the \textit{large river effect} \cite{Delamotte:2007pf}. There are two additional internal approximations with our method. First, we focus on the \textit{symmetric phase}, defined as the region of the full phase space where an expansion around vanishing classical fields $M=\bar{M}=0$ holds. This simplification, in particular, ensures the following property for the effective $2$-point function \eqref{twopoint}:
\begin{property}\label{prop1}
In the symmetric phase, all the odd effective vertex functions, having not the same number of derivatives with respect to $M$ and $\bar{M}$ vanish. Moreover, the effective $2$-point function $G_k(\textbf{p},\textbf{p}^\prime)$ is diagonal:
\begin{equation}
G_k(\textbf{p},\textbf{p}^\prime):=G_k(\textbf{p})\delta_{\textbf{p}\textbf{p}^\prime}\,.
\end{equation}
\end{property}
See \cite{Lahoche:2018oeo}-\cite{Lahoche:2018vun} for more details. The previous statement ensures in particular that $\Gamma_k^{(n)}=0$ for $n=2p+1$. Moreover, note that because of the closure constraint, we must have $G_k(\textbf{p})\propto \delta(\sum_i p_i)$. In general, for effective vertex $\Gamma_k^{(2n)}$ we must have:
\begin{property}
In the symmetric phase, all the effective vertices $\Gamma_k^{(2n)}(\textbf{p}_1,\cdots, \textbf{p}_{2n})$ are such that $\sum_{i=1}^d p_{\ell i}=0\,\, \forall \ell \in \llbracket 1, 2n \rrbracket$\,.
\end{property}
This property comes straightforwardly from the observation that each Wick contraction with the propagator \eqref{propaprojection} project ‘‘on the right and the left". Hence, because external fields are necessarily contracted with a propagator, their arguments are \textit{de facto} constraint to have a vanishing sum \cite{Benedetti:2015yaa}.
\medskip

The second approximation is that we work on the non-branching sector defined in \ref{defnonbranch}. This can be justified from the observation that all the effective vertices can be labeled with a bubble corresponding to the pattern of contractions of external momenta. It corresponds to the boundary graph in the sense of definition \ref{defboundary} of any Feynman graph contributing to the perturbative expansion of such an effective vertex. In the melonic sector, all the boundaries are melonic bubbles following the definition \ref{defmelons}. Moreover, it can be checked recursively that, starting with effective vertices indexed with non-branching melonic bubbles, we can not generate leading order one-loop contributions outside of the non-branching subspace (our explicit computations in this section supporting this argument). Up to these approximations, EVE method aims to close the infinite hierarchy arising from the exact RG flow equation \eqref{Wetterich} using the recursive definition of melons. In the functional space, these take the form of non-trivial relations between effective vertices, all of them $\Gamma_k^{(2n)}$ for $n>2$ being expressed in terms of $\Gamma_k^{(4)}$ and $\Gamma_k^{(2)}$ only, closing the hierarchy in around the marginal interaction without sharp restriction on the momentum dependence of vertices and their number. Only the initial conditions are imposed, the flow being initialized as close as possible to the renormalizable sector. Now let us defined the symmetry invariance of our model through the WT identities.

\subsection{Ward-Takahashi identities}

In this section, we derive the WT identities for our model defined  in the relation \eqref{partition1}.  As we explained above (see section \eqref{sec21}), due to the gauge-invariant condition (the closure constraint) the ordinary method to derive these identities  
\cite{Samary:2014oya}-\cite{Ellwanger:1994iz} see also \cite{Wetterich:1992yh}-\cite{Wetterich:2017aoy},  need to be revisited with  great care. Our starting point is to rewrite the equivalent relations of \eqref{eq3} in the dual space $\mathbb{Z}^{D}$ and work with the Fourier components $T_{\textbf{p}}$ and $\bar{T}_{\textbf{p}}$ as:
\beq
T^{\prime}_{\textbf{p}}=[\mathcal{U}T]_{p_{1}...p_{d}}=\sum_{\textbf{q}}\Big[\prod^{d}_{i=1}U_{i, p_{i}q_{i}}\Big]T_{q_{1}, ..., q_{d}}
\eeq
\beq
\bar{T}^{\prime}_{\textbf{p}}=[T\mathcal{U}]^{\dagger}_{p_{1}...p_{d}}=\sum_{\textbf{q}}\Big[\prod^{d}_{i=1}\bar{U}_{i, p_{i}q_{i}}\Big]\bar{T}_{q_{1}, ..., q_{d}}
\eeq
where $\textbf{q}=(q_{1}, ..., q_{d})$. For tensorial group field theory, the interaction terms in the action are built such that it remains invariant under the unitary symmetry i.e. $[\mathcal{U}S_{int}]=S_{int}$. This is not the case for the kinetic term, due to the nontrivial propagator\footnote{Note that in the case of trivial propagator the kinetic action is invariant only in the initial condition on the flow i.e. in the UV regime.  This invariance is broken due to the appearance of the regulator.}. We have the following statement:
Instead of considering the global transformation on all the indices of the tensor, we can just work with a particular index $i$ and the final result will be, such that the variation of the action becomes:
\bea\label{vari2}
\delta^{\otimes U}S=\sum_{i=1}^d\delta^{(i)} S,
\eea
where $\delta^{\otimes U}S$ is the variation of the action under the global transformation $U(N)\otimes U(N)\cdots\otimes U(N):=U(N)^{\otimes d}$ and $\delta^{(i)} S$ is the variation of the action under U(N) transformation acting on the index $i$ of the tensor. To be more precise,
note that this property is satisfied only if we consider the infinitesimal first order $U(N)$ transformation.
Remark that by considering the $m$ order Taylor expansion of $U_{i, p_{i}p^{\prime}_{i}}$ we can generalize the relation \eqref{vari2} as
\bea\label{vari3}
\delta^{\otimes U}S=\sum_{i_1\neq i_2\neq\cdots \neq i_m=1}^d\delta^{(i_1,\cdots,i_m)} S.
\eea
The starting point to derive the Ward identities is to provide the variation of the partition function under this unitary transformation. Because the inverse of the propagator is not well defined, we use the following identity
\bea
Z_k[\J,\bar \J]=e^{\sum_{\textbf{p}}\frac{\partial}{\partial T_{\textbf{p}}}C(\textbf{p})\frac{\partial}{\partial\bar{T}_{\textbf{p}}}}e^{-S_{int}[T, \bar{T}]-R_k[T, \bar{T}]+\bar{T}\cdot\J+\bar{\J}\cdot T]}\,\,\Big|_{T=\bar{T}=0},
\eea
where $\bar{T}\cdot \J+\bar{\J}\cdot T=\sum_{\textbf{p}}\big(\bar{T}_{\textbf{p}}\J_{\textbf{p}}+\bar{\J}_{\textbf{p}}T_{\textbf{p}}\big)$.  We shall focus on the first index $i=1$ and we denote by $U_{1}$ the transformation acting on the fields $T$ and $(\bar{T}, resp.)$. From the previous relations, we get the partial derivative with respect to tensor components transformation laws given by:
\bea
\frac{\partial}{\partial T^{\prime}_{\textbf{p}}}=\sum_{p^\prime_{1}}U^{-1}_{p^\prime_{1}p_{1}}\frac{\partial}{\partial T_{p^\prime_{1}\textbf{p}_{\perp}}}\,\,\,\,,\,\,\,\,
\frac{\partial}{\partial\bar{T}^{\prime}_{\textbf{p}}}=\sum_{p^\prime_{1}}\bar{U}^{-1}_{p^\prime_{1}p_{1}}\frac{\partial}{\partial \bar{T}_{p^\prime_{1}\textbf{p}_{\perp}}},
\eea
where we define $\textbf{p}_{\perp}=(0, p_{2}, ..., p_{d})$.
The $U_{1}$ infinitesimal transformation can be written as:
\begin{equation}
U_{1, p_{1}p^{\prime}_{1}}\simeq\delta_{p_{1}p^{\prime}_{1}}+\epsilon_{1, p_{1}p^{\prime}_{1}}+O(\epsilon^{2}_{1}),\,\,\,\,\,\, \bar{U}_{1, p_{1}p^{\prime}_{1}}\simeq\delta_{p_{1}p^{\prime}_{1}}+\bar{\epsilon}_{1, p_{1}p^{\prime}_{1}}+O(\bar{\epsilon}^{2}_{1}),
\end{equation}
where $\epsilon$ is infinitesimal anti-hermitian operator corresponding to the generator of unitary group element $U_{1}$, then, $\epsilon^{\dagger}_{1}=-\epsilon_{1}$. Acting $U_{1}$ on the first tensor index, allow us get the following transformations:
\bea
&&\frac{\partial}{\partial T}C\frac{\partial}{\partial\bar{T}}\rightarrow\frac{\partial}{\partial T}C\frac{\partial}{\partial\bar{T}}+\sum_{\textbf{p},\textbf{p}^{\prime}}\prod_{j\neq1}\delta_{p_{j}p^{\prime}_{j}}\Big[\frac{\partial}{\partial T_{\textbf{p}}}\big[C(\textbf{p})-C(\textbf{p}\,^\prime)\big]\frac{\partial}{\partial\bar{T}_{\textbf{p}\,^\prime}}\Big]\epsilon_{1, p_{1}p^{\prime}_{1}}
\\
&& R_{k}\rightarrow R_{k}+\sum_{\textbf{p},\textbf{p}\,^\prime}\prod_{j\neq1}\delta_{p_{j}p^\prime_{j}}T_{\textbf{p}\,^\prime}\big[r_{k}(\textbf{p}^{2})-r_{k}(\textbf{p}^{\prime2})\big]\bar{T}_{\textbf{p}}\epsilon_{1, p_{1}p^\prime_{1}}
\\
&&F\rightarrow F+\sum_{\textbf{p},\textbf{p}^{\prime}}\prod_{j\neq1}\delta_{p_{j}p^{\prime}_{j}}\big[\bar{\J}_{\textbf{p}}T_{\textbf{p}^\prime}-\bar{T}_{\textbf{p}}\J_{\textbf{p}^\prime}\big]\epsilon_{1, p_{1}p^{\prime}_{1}}
\eea
where we adopt the following notations:
\bea
\sum_{\textbf{p}}\frac{\partial}{\partial T_{\textbf{p}}}C(\textbf{p})\frac{\partial}{\partial\bar{T}_{\textbf{p}}}=\frac{\partial}{\partial T}C\frac{\partial}{\partial\bar{T}},\quad F=\bar{T}\cdot \J+\bar{\J}\cdot T, \,R_{k}[T, \bar{T}]=R_{k}.
\eea
Now the total variation of the partition function under $\mathcal U$ transformations in the first order of the parameter $\epsilon$ becomes
\bea\label{nato}
&&\delta Z_k=\sum_{i=1}^d\exp\Big(\frac{\partial}{\partial T}C\frac{\partial}{\partial\bar{T}}\Big)\Bigg\{\sum_{\textbf{p},\textbf{p}^\prime}\prod_{j\neq i}\delta_{p_jp_j'}\bigg[\frac{\partial}{\partial T_{\textbf{p}}}\Delta C(\textbf{p}, \textbf{p}^\prime)\frac{\partial}{\partial\bar{T}_{\textbf{p}^\prime}}+\bar{T}_{\textbf{p}^{\prime}}\big(r_{k}(\textbf{p}^{2})-r_{k}(\textbf{p}^{\prime2})\big)\bar{T}_{\textbf{p}}\cr
&&\big(\bar{\J}_{\textbf{p}}T_{\textbf{p}^\prime}-\bar{T}_{\textbf{p}}\J_{\textbf{p}^\prime}\big)\bigg]\epsilon_{i, p_{i}p^{\prime}_{i}}\Bigg\}e^{-S_{int}-R_{k}+F}\,\,\Big|_{T=\bar{T}=0}=0
\eea
where $\Delta C(\textbf{p}, \textbf{p}^\prime)= C(\textbf{p})-C(\textbf{p}^\prime)$.
The identity given in \eqref{nato} is independant to the generator $\epsilon$ of the $U(N)$ transformation and therefore is independant for all indices $i=1,2,\cdots d$. Then by setting $i=1$ we get the following relation:
\bea\label{eq39}
&&\exp\Big(\frac{\partial}{\partial T}C\frac{\partial}{\partial\bar{T}}\Big)\sum_{\textbf{p}_{\perp} \textbf{p}^\prime_{\perp}}\prod_{j\neq 1}\delta_{p_jp_j'}\Big[T_{\textbf{p}^\prime}\bar{\J}_{\textbf{p}}-\bar{T}_{\textbf{p}}\J_{\textbf{p}^\prime}\Big]e^{-S_{int}-R_{k}+F}\,\,\Big|_{T=\bar{T}=0}=0.
\eea
Note that in the previous relation we have deleted the quantities $\frac{\partial}{\partial T_{\textbf{p}}}\Delta C(\textbf{p}, \textbf{p}^\prime)\frac{\partial}{\partial\bar{T}_{\textbf{p}^\prime}}$ and $\bar{T}_{\textbf{p}^{\prime}}\big(r_{k}(\textbf{p}^{2})-r_{k}(\textbf{p}^{\prime2})\big)\bar{T}_{\textbf{p}}$ because of the closure constraint.
Then \eqref{eq39} can be simply re-expressed as  follows:
\bea
&&\sum_{\textbf{p}_{\perp} \textbf{p}^\prime_{\perp}}\prod_{j\neq 1}\delta_{\textbf{p} \textbf{p}^\prime}\Big[\bar \J_{\textbf{p}}\exp\Big(\frac{\partial}{\partial T}C\frac{\partial}{\partial\bar{T}}-S_{int}-R_{k}+F\Big)T_{\textbf{p}^\prime}\cr
&&-\J_{\textbf{p}^\prime}\exp\Big(\frac{\partial}{\partial T}C\frac{\partial}{\partial\bar{T}}-S_{int}-R_{k}+F\Big)\bar{T}_{\textbf{p}}\Big]\,\,\Big|_{T=\bar{T}=0}=0.
\eea
Finally after few algebraic computation, in terms of the generator of the connected Green function $W_k$ we get the relation:
\beq
\sum_{\textbf{p}_{\perp} \textbf{p}^\prime_{\perp}}\delta_{\textbf{p}_{\perp} \textbf{p}^\prime_{\perp}}\bigg[\bar{\J}_{\textbf{p}}\frac{\partial}{\partial\bar{\J}_{\textbf{p}^\prime}}-\J_{\textbf{p}^\prime}\frac{\partial}{\partial \J_{\textbf{p}}}\bigg]e^{W_{k}}=0,
\eeq
and the following result holds:
\begin{theorem}\label{WardID1}
The first order WT identities of the model  defined by \eqref{partition1} are given by the following relation
\beq
\mI_i=\sum_{ \textbf{p}_{\bot i}\textbf{p}\,^\prime_{\bot i}}\delta_{\textbf{p}_{\perp i} \textbf{p}^\prime_{\perp i}}\Big[\bar{\J}_{\textbf{p}}M_{\textbf{p}^\prime}-\bar{M}_{\textbf{p}}\J_{\textbf{p}^\prime}\Big]=0,\label{socle2}
\eeq
where $M_{\textbf{p}^\prime}=\frac{\partial W_{k}}{\partial \bar{\J}_{\textbf{p}^\prime}}, \,\,\,\, \bar{M}_{\textbf{p}}=\frac{\partial W_{k}}{\partial\J_{\textbf{p}}}$ are the mean fields.
\end{theorem}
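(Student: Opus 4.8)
The plan is to obtain \eqref{socle2} as the first-order consequence of the formal invariance of the partition function $Z_k$ under a unitary rotation of a single tensor index, along the lines set up just before the statement. Because the decomposition \eqref{vari2} shows that the global variation splits into independent contributions $\delta^{(i)}S$, one index at a time, it suffices to treat $i=1$; repeating the computation for each $i$ then delivers the whole family $\mI_i=0$. The starting observation is that the change of variables $T_\textbf{p}\mapsto [U_1 T]_\textbf{p}$, $\bar T_\textbf{p}\mapsto[\bar T U_1^\dagger]_\textbf{p}$ leaves the functional measure invariant, being a unitary transformation acting on one index only, so the total first-order variation of $Z_k$ must vanish; this is precisely \eqref{nato} specialised to $i=1$, i.e. \eqref{eq39}.

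First I would collect the three pieces of the variation recorded above: the rotation of the contraction operator $\frac{\partial}{\partial T}C\frac{\partial}{\partial\bar T}$, which produces a term proportional to $\Delta C(\textbf{p},\textbf{p}')=C(\textbf{p})-C(\textbf{p}')$; the rotation of the regulator $R_k$, producing a term proportional to $r_k(\textbf{p}^2)-r_k(\textbf{p}'^2)$; and the rotation of the source term $F$, producing $\bar J_\textbf{p} T_{\textbf{p}'}-\bar T_\textbf{p} J_{\textbf{p}'}$. All three carry the transverse constraint $\prod_{j\neq1}\delta_{p_j p_j'}$ and are contracted against the generator $\epsilon_{1,p_1 p_1'}$.

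The decisive step, and the one I expect to be the real content, is the cancellation of the first two pieces by virtue of the closure constraint. Here the key is that the bare propagator carries the Gauss delta $\delta(\sum_j p_j)$, so that in every surviving Wick contraction the external momenta are constrained by $\sum_j p_j=0=\sum_j p_j'$. Combined with the transverse identification $p_j=p_j'$ for all $j\neq1$, this forces $p_1=p_1'$, hence $\textbf{p}=\textbf{p}'$, so that $\Delta C(\textbf{p},\textbf{p}')=0$ and $r_k(\textbf{p}^2)-r_k(\textbf{p}'^2)=0$ identically on the support of the constraint. This is exactly the mechanism that is absent for propagators without closure constraint, and it is what makes the identity collapse to its trivial form. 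I would make the argument precise by showing that the operators $\frac{\partial}{\partial T_\textbf{p}}\Delta C\,\frac{\partial}{\partial\bar T_{\textbf{p}'}}$ and $\bar T_{\textbf{p}'}(r_k(\textbf{p}^2)-r_k(\textbf{p}'^2))\bar T_\textbf{p}$ vanish once the closure delta is propagated through, leaving only the source contribution in \eqref{eq39}.

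Finally, with only the source term surviving, I would rewrite the two remaining pieces as derivatives of $e^{W_k}$, using $\frac{\partial}{\partial\bar J_{\textbf{p}'}}e^{W_k}=M_{\textbf{p}'}e^{W_k}$ and $\frac{\partial}{\partial J_\textbf{p}}e^{W_k}=\bar M_\textbf{p} e^{W_k}$ from the definitions \eqref{M}; dividing through by $e^{W_k}$ and relabelling the transverse sum with $\delta_{\textbf{p}_{\perp i}\textbf{p}'_{\perp i}}$ then yields \eqref{socle2}. The only subtlety at this last stage is the bookkeeping of the transverse index and checking that the sums over $p_1,p_1'$ are correctly encoded; the substantive obstacle remains the closure-constraint cancellation of the kinetic and regulator variations.
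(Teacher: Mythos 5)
Your proof follows essentially the same route as the paper's: a first-order variation of $Z_k$ in the operator representation under a single-index unitary transformation, with the kinetic ($\Delta C$) and regulator variation terms deleted by the closure constraint, and the surviving source term rewritten via $\frac{\partial}{\partial\bar{J}_{\textbf{p}^\prime}}e^{W_k}=M_{\textbf{p}^\prime}e^{W_k}$, $\frac{\partial}{\partial J_{\textbf{p}}}e^{W_k}=\bar{M}_{\textbf{p}}e^{W_k}$ to obtain \eqref{socle2}. Your explicit mechanism for the cancellation (the Gauss deltas force $\sum_j p_j=\sum_j p_j^\prime=0$, which together with the transverse Kronecker deltas gives $\textbf{p}=\textbf{p}^\prime$) is precisely the justification the paper invokes with the phrase ``because of the closure constraint'' but leaves implicit, so the two arguments coincide.
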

Let us remark that the above identity given in the relation \eqref{socle2} is irrelevant in our analysis i.e., this identity is trivial and can not contribute as a new constraint in the Wetterich flow equation. To be more precise, computing the first-order derivative with respect to $\J$ and $\bar \J$ of \eqref{socle2} leads to
\bea
G_{k}(\textbf{p}, \textbf{p}^\prime)=G_k(\textbf{p})\delta_{\textbf{p} \textbf{p}^\prime},
\eea
which corresponds to the symmetric phase condition. Now let us investigate the same computation by taking into account the second order expansion of the $U(N)$ transformation as:
\bea
&& U_{i, p_{i}p^{\prime}_{i}}\simeq\delta_{p_{i}p^{\prime}_{i}}+\epsilon_{i, p_{i}p^{\prime}_{i}}+\frac{1}{2}\sum_q \epsilon_{i, p_{i}q}\epsilon_{i, qp^{\prime}_{i}} +O(\epsilon^{3}_{i}),\\ &&\bar{U}_{i, p_{i}p^{\prime}_{i}}\simeq\delta_{p_{i}p^{\prime}_{i}}+\bar{\epsilon}_{i, p_{i}p^{\prime}_{i}}+\frac{1}{2}\sum_q\bar\epsilon_{i, p_{i}q}\bar\epsilon_{i, qp^{\prime}_{i}} +O(\epsilon^{3}_{i}).
\eea
The full variation of the action becomes
$
\delta^{\otimes U}S=\sum_{i\neq j}^d\delta^{(i,j)} S.
$
Then, by setting $(i,j)=(1,2)$, the variation of the partition function may be written as
\bea
\delta Z_k=0 \Leftrightarrow \mI_1\epsilon_1+\mI_2\epsilon_2+ \mI_{12}\epsilon_1\epsilon_2+ \mI_{11}\epsilon_1\epsilon_1+ \mI_{22}\epsilon_2\epsilon_2=0,
\eea
where the first order WT identity given in \eqref{socle2}  is
well satisfied and corresponds to $\mI_1=0$ and $\mI_2=0$. By taking into account only the second order terms on $\epsilon$, we obtain new identities that we denote by second order WT identities, i.e. $\mI_{12}=0$, $\mI_{11}=0$, $\mI_{22}=0$, where

\bea
&& \mI_{12}=\sum_{\textbf{p}_{\bot 1}, \textbf{p}'_{\bot 1}}\sum_{\textbf{q}_{\bot 2}, \textbf{q}'_{\bot 2}}\delta_{\textbf{p}_{\bot 1}\textbf{p}'_{\bot 1}}\delta_{\textbf{q}_{\bot 2}\textbf{q}'_{\bot 2}}\Big(\bar{\J}_{\textbf{p}}\bar{\J}_{\textbf{q}}G^{(0, 2)}_{k}(\textbf{q}^\prime; \textbf{p}^\prime)
+M_{\textbf{p}^\prime}M_{\textbf{q}^\prime}\bar{\J}_{\textbf{p}}\bar{\J}_{\textbf{q}}-\bar{\J}_{\textbf{p}}\J_{\textbf{q}^\prime}G^{(1, 1)}_{k}(\textbf{q}; \textbf{p}^\prime)\cr&&-\bar{\J}_{\textbf{p}}\J_{\textbf{q}^\prime}\bar
M_{\textbf{q}}M_{\textbf{p}^\prime}-\J_{\textbf{p}^\prime}\bar{\J}_{\textbf{q}}G^{(1, 1)}_{k}(\textbf{p}; \textbf{q}^\prime)-\bar{M}_{\textbf{p}}M_{\textbf{q}^\prime}\J_{\textbf{p}^\prime}\bar{\J}_{\textbf{q}}+\J_{\textbf{p}^\prime}\J_{\textbf{q}^\prime}G^{(2, 0)}_{k}(\textbf{p}; \textbf{q})+\J_{\textbf{p}^\prime}\J_{\textbf{q}^\prime}\bar{M}_{\textbf{p}}\bar{M}_{\textbf{q}}\Big)\cr
&& +\sum_{\textbf{p}_{\perp\perp}, \textbf{p}^\prime_{\perp\perp}}\delta_{\textbf{p}_{\perp\perp}\textbf{p}^\prime_{\perp\perp}}\big(\bar{\J}_{\textbf{p}}M_{\textbf{p}^\prime}+\bar{M}_{\textbf{p}}\J_{\textbf{p}^\prime}\big)\delta_{{p}'_{2}q'_{2}}\delta_{{p}_{2}q_{2}}
\eea
\bea
&& \mI_{11}=\sum_{\textbf{p}_{\bot 1}, \textbf{p}'_{\bot 1}}\sum_{\textbf{q}_{\bot 1}, \textbf{q}'_{\bot 1}}\delta_{\textbf{p}_{\perp 1}\textbf{p}^\prime_{\perp 1}}\delta_{\textbf{q}_{\perp 1}\textbf{q}^{\prime}_{\perp 1}}\Big(\bar{\J}_{\textbf{p}}\bar{\J}_{\textbf{q}}G^{(0, 2)}_{k}(\textbf{q}^\prime; \textbf{p}^\prime)
+M_{\textbf{p}^\prime}M_{\textbf{q}^\prime}\bar{\J}_{\textbf{p}}\bar{\J}_{\textbf{q}}-\bar{\J}_{\textbf{p}}\J_{\textbf{q}^\prime}G^{(1, 1)}_{k}(\textbf{q}; \textbf{p}^\prime)\cr&&-\bar{\J}_{\textbf{p}}\J_{\textbf{q}^\prime}\bar
M_{\textbf{q}}M_{\textbf{p}^\prime}-\J_{\textbf{p}^\prime}\bar{\J}_{\textbf{q}}G^{(1, 1)}_{k}(\textbf{p}; \textbf{q}^\prime)-\bar{M}_{\textbf{p}}M_{\textbf{q}^\prime}\J_{\textbf{p}^\prime}\bar{\J}_{\textbf{q}}+\J_{\textbf{p}^\prime}\J_{\textbf{q}^\prime}G^{(2, 0)}_{k}(\textbf{p}; \textbf{q})+\J_{\textbf{p}^\prime}\J_{\textbf{q}^\prime}\bar{M}_{\textbf{p}}\bar{M}_{\textbf{q}}\Big)\cr
&&+\sum_{\textbf{p}_{\perp 1}, \textbf{p}'_{\perp 1}}\delta_{\textbf{p}_{\perp 1}\textbf{p}^\prime_{\perp 1}}\big(\bar{\J}_{\textbf{p}}M_{\textbf{p}^{\prime}}\delta_{p_1q_1}\delta_{p_1q'_1}+\bar{M}_{\textbf{p}}\J_{\textbf{p}^{\prime}}\delta_{p'_1q_1}\delta_{p'_1q'_1}\big)
\eea
\bea
&& \mI_{22}=\sum_{\textbf{p}_{\bot 2}, \textbf{p}'_{\bot 2}}\sum_{\textbf{q}_{\bot 2}, \textbf{q}'_{\bot 2}}\delta_{\textbf{p}_{\perp 2}\textbf{p}^\prime_{\perp 2}}\delta_{\textbf{q}_{\perp 2}\textbf{q}^{\prime}_{\perp 2}}\Big(\bar{\J}_{\textbf{p}}\bar{\J}_{\textbf{q}}G^{(0, 2)}_{k}(\textbf{q}^\prime; \textbf{p}^\prime)
+M_{\textbf{p}^\prime}M_{\textbf{q}^\prime}\bar{\J}_{\textbf{p}}\bar{\J}_{\textbf{q}}-\bar{\J}_{\textbf{p}}\J_{\textbf{q}^\prime}G^{(1, 1)}_{k}(\textbf{q}; \textbf{p}^\prime)\cr&&-\bar{\J}_{\textbf{p}}\J_{\textbf{q}^\prime}\bar
M_{\textbf{q}}M_{\textbf{p}^\prime}-\J_{\textbf{p}^\prime}\bar{\J}_{\textbf{q}}G^{(1, 1)}_{k}(\textbf{p}; \textbf{q}^\prime)-\bar{M}_{\textbf{p}}M_{\textbf{q}^\prime}\J_{\textbf{p}^\prime}\bar{\J}_{\textbf{q}}+\J_{\textbf{p}^\prime}\J_{\textbf{q}^\prime}G^{(2, 0)}_{k}(\textbf{p}; \textbf{q})+\J_{\textbf{p}^\prime}\J_{\textbf{q}^\prime}\bar{M}_{\textbf{p}}\bar{M}_{\textbf{q}}\Big)\cr&&
+\sum_{\textbf{p}_{\perp 2}, \textbf{p}'_{\perp 2}}\delta_{\textbf{p}_{\perp 2}\textbf{p}^\prime_{\perp 2}}\big(\bar{\J}_{\textbf{p}}M_{\textbf{p}^{\prime}}\delta_{p_2q_2}\delta_{p_2q'_2}+\bar{M}_{\textbf{p}}\J_{\textbf{p}^{\prime}}\delta_{p'_2q_2}\delta_{p'_2q'_2}\big)
\eea
where $\textbf{p}_{\perp 1}=(0, p_{2}, ..., p_{d}),\,\textbf{p}_{\perp 2}=(p_{1}, 0, p_{3}, ..., p_{d}), \textbf{p}_{\bot\bot}=(0, 0,p_3, ..., p_{d}).$
\begin{proposition}\label{propo1}
The second order terms of the variation of the partition function do not provide an additional constraint on the correlation function i.e. $\mI_{ij},\, i=1,2$   vanish identically.
\end{proposition}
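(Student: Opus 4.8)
\emph{Approach.} The plan is to show that $\mI_{12},\mI_{11},\mI_{22}$ are not independent relations but the second iterate of the first-order Ward operator acting on the partition function, so that they collapse to zero as soon as Theorem \ref{WardID1} holds. Introduce the first-order operator
\beq
\mW_i:=\sum_{\textbf{p}_{\perp i}\,\textbf{p}'_{\perp i}}\delta_{\textbf{p}_{\perp i}\textbf{p}'_{\perp i}}\Big[\bar J_{\textbf{p}}\frac{\partial}{\partial\bar J_{\textbf{p}'}}-J_{\textbf{p}'}\frac{\partial}{\partial J_{\textbf{p}}}\Big]\,,
\eeq
so that $\mI_i=Z_k^{-1}\mW_i Z_k$ and the content of \eqref{socle2} is precisely $\mW_i Z_k=0$. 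I would then establish that, once the closure constraint has been used, the second-order identities take the form $\mI_{ij}=Z_k^{-1}\mW_i\mW_j Z_k$, whence $\mI_{ij}=Z_k^{-1}\mW_i(\mW_j Z_k)=Z_k^{-1}\mW_i(0)=0$.

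\emph{Step 1: killing the propagator and regulator differences.} Expanding $\delta Z_k=0$ to order $\epsilon^2$ produces, besides the source-bilinear pieces, contributions carrying the differences $\Delta C(\textbf{p},\textbf{p}')$ and $r_k(\textbf{p}^2)-r_k(\textbf{p}'^2)$, now attached either to the second-order part $\tfrac12\epsilon_i^2$ of a single transformation or to the crossed product $\epsilon_i\epsilon_j$. I would dispose of all of these exactly as in the passage from \eqref{nato} to \eqref{eq39}: the transformation on an index $i$ carries $\prod_{j\neq i}\delta_{p_jp_j'}$, while every external field is ultimately contracted through a propagator proportional to $\delta(\sum_j p_j)$; the two together force $p_i=p_i'$, hence $\textbf{p}=\textbf{p}'$, on the support of the sums, so that $\Delta C$ and $r_k(\textbf{p}^2)-r_k(\textbf{p}'^2)$ vanish identically there. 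This removes every derivative- and regulator-difference term from $\mI_{12},\mI_{11},\mI_{22}$ and leaves only the source-bilinear blocks written in the statement.

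\emph{Step 2: assembling the Ward operator.} I would next group each connected correlator with its accompanying mean-field product, using
\beq
G^{(0,2)}_k(\textbf{q}';\textbf{p}')+M_{\textbf{q}'}M_{\textbf{p}'}=\frac{1}{Z_k}\frac{\partial^2 Z_k}{\partial\bar J_{\textbf{q}'}\partial\bar J_{\textbf{p}'}}\,,
\eeq
together with the analogous rewritings of the $G^{(1,1)}_k$ and $G^{(2,0)}_k$ blocks. The bulk double sum then becomes a source bilinear weighted by the full two-point functions, while the single-sum contributions carrying the extra factors $\delta_{p_1q_1}\delta_{p_1q'_1}$, $\delta_{p_2q_2}\delta_{p_2q'_2}$ and $\delta_{p'_2q'_2}\delta_{p_2q_2}$ are exactly the contact terms produced when a derivative of $\mW_i$ strikes an explicit source carried by $\mW_j$. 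Bulk and boundary together reconstruct $Z_k^{-1}\mW_i\mW_j Z_k$, and invoking the first-order identity $\mW_j Z_k=0$ forces $\mW_i\mW_j Z_k=\mW_i(0)=0$, hence $\mI_{12}=\mI_{11}=\mI_{22}=0$.

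\emph{Main obstacle.} The delicate point is the matching carried out in Step 2: one must verify that the transverse constraints $\delta_{\textbf{p}_{\perp i}\textbf{p}'_{\perp i}}$ combine with the closure deltas carried by the full two-point functions so that the reassembled bilinear has exactly the index pattern of $Z_k^{-1}\mW_i\mW_j Z_k$, that the doubly transverse label $\textbf{p}_{\perp\perp}$ of the crossed identity arises correctly as the contact term of two single-index operators, and that the relative signs dictated by the anti-hermiticity $\epsilon_i^\dagger=-\epsilon_i$ are consistent throughout. I expect no genuine obstruction once Step 1 has been performed, but this is the place where a mislabelling of the $\textbf{p}_\perp$ or $\textbf{p}_{\perp\perp}$ indices would leave a spurious surviving term, so the bookkeeping must be done with care.
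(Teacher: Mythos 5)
Your proposal is correct and is, at bottom, the paper's own proof in operator dress: expanding your $\mW_i(\mW_j Z_k)=\mW_i\big(Z_k\,\mI_j\big)$ by Leibniz splits it into the term where the derivatives of $\mW_i$ hit $\mI_j$ --- which is exactly the paper's differentiated-and-source-multiplied relations \eqref{vin12}--\eqref{vin13} accounting for the $G^{(a,b)}_k$ blocks plus contact terms --- and the term $(\mW_i Z_k)\mI_j=Z_k\,\mI_i\mI_j$, which is exactly the paper's disposal of the leftover mean-field quartic block via the first-order identities. Your Step~1 matches the paper's implicit use of the closure constraint to discard the $\Delta C$ and $r_k$-difference pieces, and your Step~2 identification $\mI_{ij}=Z_k^{-1}\mW_i\mW_j Z_k$ does reproduce the stated bulk and contact structure, so the argument is the same one, merely packaged so that the vanishing is manifest in one line and the paper's closing remark about higher-order identities becomes immediate.
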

\begin{proof}
Before giving the proof of the proposition let us remark that this result is central for our investigation due to the fact that an additional constraint may allow being not compatible with the Wetterich flow equation. Recall that:
\beq\label{nanana}
M=\frac{\partial W}{\partial \bar{\J}} \mbox{ and } \bar M=\frac{\partial W}{\partial\J}\Rightarrow M':=\frac{\partial M}{\partial \J}=\frac{\partial \bar{M}}{\partial \bar\J}=:\bar{M}'.
\eeq
Let us consider the first order WT identity given by the relation \eqref{WardID1}. The partial derivative of this identity with respect to $\J$, in the first time, and with respect to $\bar \J$ in the second time are respectively
\bea
\frac{\partial \mI_i}{\partial \J_{\bf q}}&=&\sum_{ \textbf{p}_{\bot i}\textbf{p}\,^\prime_{\bot i}}\delta_{\textbf{p}_{\perp i} \textbf{p}^\prime_{\perp i}}\Big(\bar{\J}_{\bf p}\frac{\partial M_{\bf p'}}{\partial \J_{\bf q}}-\frac{\partial \bar{M}_{\bf p}}{\partial \J_{\bf q}}\J_{\bf p'}-\bar{M}_{\bf p}\delta_{\bf p'q}\Big)\cr
&=&\sum_{ \textbf{p}_{\bot i}\textbf{p}\,^\prime_{\bot i}}\delta_{\textbf{p}_{\perp i} \textbf{p}^\prime_{\perp i}}\Big(\bar{\J}_{\bf p}\frac{\partial^2 W_k}{\partial \J_{\bf q}\partial \bar{\J}_{\bf p'}}-\frac{\partial^2 W_k}{\partial \J_{\bf q}\partial \J_{\bf p}}\J_{\bf p'}-\bar{M}_{\bf p}\delta_{\bf p'q}\Big)=0
\eea
Multiplying this relation by $\J_{\bf q'}$ and using the fact that $\frac{\partial^2 W_k}{\partial \J_{\bf q}\partial \bar{\J}_{\bf p'}}=G_k^{(1,1)}({\bf q,p'})$, $\frac{\partial^2 W_k}{\partial \J_{\bf q}\partial \J_{\bf p}}=G_k^{(2,0)}({\bf p,q})$, we get:
\beq\label{vin12}
\sum_{ \textbf{p}_{\bot i}\textbf{p}\,^\prime_{\bot i}}\delta_{\textbf{p}_{\perp i} \textbf{p}^\prime_{\perp i}}\Big(G_k^{(2,0)}({\bf p,q})\J_{\bf q'}\J_{\bf p'}-G_k^{(1,1)}({\bf q,p'})\J_{\bf q'}\bar{\J}_{\bf p}\Big)=-\sum_{ \textbf{p}_{\bot i}\textbf{p}\,^\prime_{\bot i}}\delta_{\textbf{p}_{\perp i} \textbf{p}^\prime_{\perp i}}\delta_{\bf p'q} \J_{\bf q'}\bar{M}_{\bf p},
\eeq
and
\bea
\frac{\partial \mI_i}{\partial \bar{J}_{\bf q'}}&=&\sum_{ \textbf{p}_{\bot i}\textbf{p}\,^\prime_{\bot i}}\delta_{\textbf{p}_{\perp i} \textbf{p}^\prime_{\perp i}}\Big(\delta_{\bf q' p}M_{\bf p'}+\bar{\J}_{\bf p}\frac{\partial M_{\bf p'}}{\partial \bar{\J}_{\bf q'}}-\frac{\partial \bar{M}_{\bf p}}{\partial \bar{\J}_{\bf q'}}\J_{\bf p'}\Big)\cr
&=&\sum_{ \textbf{p}_{\bot i}\textbf{p}\,^\prime_{\bot i}}\delta_{\textbf{p}_{\perp i} \textbf{p}^\prime_{\perp i}}\Big(\delta_{\bf q' p}M_{\bf p'}+\bar{\J}_{\bf p}\frac{\partial^2 W_k}{\partial \bar{\J}_{\bf q'}\partial \bar{\J}_{\bf p'}}-\frac{\partial^2 W_k}{\partial \J_{\bf p}\partial \bar{\J}_{\bf q'}}\J_{\bf p'}\Big)
\eea
Multiplying this relation by $\bar{\J}_{\bf q}$ and using the fact that $\frac{\partial^2 W_k}{\partial \bar{\J}_{\bf q'}\partial \bar{\J}_{\bf p'}}=G^{(0,2)}_{k}({\bf p',q'})$, $\frac{\partial^2 W_k}{\partial \J_{\bf p}\partial \bar{\J}_{\bf q'}}=G^{(1,1)}_{k}({\bf p,q'})$ we get:
\beq\label{vin13}
\sum_{ \textbf{p}_{\bot i}\textbf{p}\,^\prime_{\bot i}}\delta_{\textbf{p}_{\perp i} \textbf{p}^\prime_{\perp i}}\Big(G^{(0,2)}_{k}({\bf p',q'})\bar{\J}_{\bf p}\bar{\J}_{\bf q}-G^{(1,1)}_{k}({\bf p,q'})\J_{\bf p'}\bar{\J}_{\bf q}\Big)=-\sum_{ \textbf{p}_{\bot i}\textbf{p}\,^\prime_{\bot i}}\delta_{\textbf{p}_{\perp i} \textbf{p}^\prime_{\perp i}}\delta_{\bf q' p}\bar{\J}_{\bf q}M_{\bf p'}.
\eeq
Now summing the expressions \eqref{vin12} and \eqref{vin13}, setting $i=1$ and by adding the sum $\sum_{ \textbf{q}_{\bot 2}\textbf{q}\,^\prime_{\bot 2}}\delta_{\textbf{q}_{\perp 2} \textbf{q}^\prime_{\perp 2}}$ we get:
\bea
&&\sum_{\textbf{p}_{\bot 1}, \textbf{p}'_{\bot 1}}\sum_{\textbf{q}_{\bot 2}, \textbf{q}'_{\bot 2}}\delta_{\textbf{p}_{\bot 1}\textbf{p}'_{\bot 1}}\delta_{\textbf{q}_{\bot 2}\textbf{q}'_{\bot 2}}\Big(\bar{\J}_{\textbf{p}}\bar{\J}_{\textbf{q}}G^{(0, 2)}_{k}(\textbf{q}^\prime; \textbf{p}^\prime)
-\bar{\J}_{\textbf{p}}\J_{\textbf{q}^\prime}G^{(1, 1)}_{k}(\textbf{q}; \textbf{p}^\prime)-\J_{\textbf{p}^\prime}\bar{\J}_{\textbf{q}}G^{(1, 1)}_{k}(\textbf{p}; \textbf{q}^\prime)\cr
&&+\J_{\textbf{p}^\prime}\J_{\textbf{q}^\prime}G^{(2, 0)}_{k}(\textbf{p}; \textbf{q})\Big)=-\sum_{\textbf{p}_{\bot 1}, \textbf{p}'_{\bot 1}}\sum_{\textbf{q}_{\bot 2}, \textbf{q}'_{\bot 2}}\delta_{\textbf{p}_{\bot 1}\textbf{p}'_{\bot 1}}\delta_{\textbf{q}_{\bot 2}\textbf{q}'_{\bot 2}}\Big(\delta_{\bf p'q} \J_{\bf q'}\bar{M}_{\bf p}+\delta_{\bf q' p}\bar{\J}_{\bf q}M_{\bf p'}\Big).
\eea
The right hand side of the above relation may be  simplified by computing the sum of the delta function which is written as
\bea
-\sum_{\textbf{p}_{\perp\perp}, \textbf{p}^\prime_{\perp\perp}}\delta_{\textbf{p}_{\perp\perp}\textbf{p}^\prime_{\perp\perp}}\big(\bar{\J}_{\textbf{p}}M_{\textbf{p}^\prime}+\bar{M}_{\textbf{p}}\J_{\textbf{p}^\prime}\big)\delta_{{p}'_{2}q'_{2}}\delta_{{p}_{2}q_{2}}.
\eea
Finally by using the first order WT identities $\mI_i=0$, we can simply deduce that
\begin{align}
&\sum_{\textbf{p}_{\bot 1}, \textbf{p}'_{\bot 1},\textbf{q}_{\bot 2}, \textbf{q}'_{\bot 2}}\delta_{\textbf{p}_{\bot 1}\textbf{p}'_{\bot 1}}\delta_{\textbf{q}_{\bot 2}\textbf{q}'_{\bot 2}}\Big(M_{\textbf{p}^\prime}M_{\textbf{q}^\prime}\bar{\J}_{\textbf{p}}\bar{\J}_{\textbf{q}}-\bar{\J}_{\textbf{p}}\J_{\textbf{q}^\prime}\bar
M_{\textbf{q}}M_{\textbf{p}^\prime}\cr
&-\bar{M}_{\textbf{p}}M_{\textbf{q}^\prime}\J_{\textbf{p}^\prime}\bar{\J}_{\textbf{q}}+\J_{\textbf{p}^\prime}J_{\textbf{q}^\prime}\bar{M}_{\textbf{p}}\bar{M}_{\textbf{q}}\Big)=0.
\end{align}
Therefore $\mI_{12}=0$,
 which shows that no additional information may be obtained by the second order WT identities. Note that the same conclusion holds by considering $\mI_{11}$ and $\mI_{22}$. To be more precise, let us remark that for $\mI_{11}$ the same computation holds up to expression \eqref{vin13}. Now summing the expressions \eqref{vin12} and \eqref{vin13}, setting $i=1$ and by adding the sum $\sum_{ \textbf{q}_{\bot 1}\textbf{q}\,^\prime_{\bot 1}}\delta_{\textbf{q}_{\perp 1} \textbf{q}^\prime_{\perp 1}}$, we get $\mI_{11}=0$. For $\mI_{22}$ the same computation holds up to expression \eqref{vin13}. Now summing the expressions \eqref{vin12} and \eqref{vin13}, setting $i=2$ and by adding the sum $\sum_{ \textbf{q}_{\bot 2}\textbf{q}\,^\prime_{\bot 2}}\delta_{\textbf{q}_{\perp 2} \textbf{q}^\prime_{\perp 2}}$, we get $\mI_{22}=0$, which ends the proof of the proposition.
\end{proof}
Remark that, due to the proposition \ref{propo1}, the higher order WT-identities are  the product of $\mI_i$ and $\mI_{k\ell}$  and therefore
vanish identically.
Finally, no WT-identity violation appears for the TGFT models with a closure constraint. In the next section, we derive the flow equations using the improved version of the EVE and provide the numerical analysis and the existence of Wilson-Fisher fixed point.

\subsection{Closing hierarchy in the non-branching sector}

Now let us focus on the melonic non-branching sector, defined in section \ref{sec21}, definition \ref{defnonbranch}, and on the symmetric phase. In such a sector of the theory space, the effective vertex functions $\Gamma_k^{n}$ are even $(n=2p)$, and decompose as a sum of functions indexed with a single color $i\in \llbracket 1,6 \rrbracket$,
\begin{equation}
\Gamma^{(2n)}_k(\{\textbf{p}_\ell\})=\sum_{i=1}^{d=6} \,\Gamma^{(b_n^{(i)})}_k(\{\textbf{p}_\ell\})\,,\label{decomp}
\end{equation}
where $b_n^{(i)}$ denote the non-branching bubble corresponding to the boundary graphs which leads to the Feynman amplitude involves in the perturbation theory. The non-branching components $\Gamma^{(b_n^{(i)})}_k$ have been extensively discussed in \cite{Lahoche:2018ggd}-\cite{Lahoche:2019cxt}; and we only provide the main statement in this section. The boundary graph index dictates the ways as the external momenta are identified. It materializes a product of Kronecker deltas, each colored edge in the equations \eqref{www} and \eqref{wwww} correspond to the bubble $b_n^{(i)}$ and then is associated to one of these deltas, following the definition of section \ref{sec21}. For our purpose, we essentially need to the $4$ and $6$-point bubbles, which reads explicitly:
\begin{align}
\label{www}
\vcenter{\hbox{\includegraphics[scale=0.9]{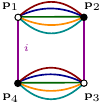} }}&\equiv\delta_{p_{1i}p_{4i}}\delta_{p_{2i}p_{3i}}\,\prod_{j\neq i}\delta_{p_{1j}p_{2j}}\delta_{p_{3j}p_{4j}}\,,\\
\label{wwww}\vcenter{\hbox{\includegraphics[scale=0.8]{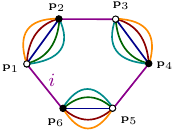} }}&\equiv\delta_{p_{1i}p_{6i}}\delta_{p_{2i}p_{3i}}\delta_{p_{4i}p_{5i}}\prod_{j\neq i}\delta_{p_{1j}p_{2j}}\delta_{p_{3j}p_{4j}}\delta_{p_{5j}p_{6j}}\,.
\end{align}
From these observations, the partial effective vertex functions $\Gamma^{(b_n^{(i)})}_k$ for a given melonic bubble $b_n^{(i)}$ reads as:
\begin{equation}
\Gamma^{(b_n^{(i)})}_k(\{\textbf{p}_\ell\})=:\sym\left(\pi_k^{(b_n^{(i)})}(\{p_{\ell\,i}\})\times\vcenter{\hbox{\includegraphics[scale=0.9]{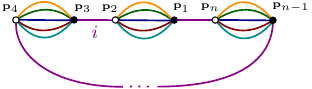} }}\right)\,, \label{vertexstructure}
\end{equation}
such that the symbol $\sym$ denotes the permutation of the external momenta. As an example:
\begin{equation}
\sym \left(\vcenter{\hbox{\includegraphics[scale=0.9]{Melon13N.pdf} }}\right)=2\left(\vcenter{\hbox{\includegraphics[scale=0.9]{Melon13N.pdf} }}+\vcenter{\hbox{\includegraphics[scale=0.9]{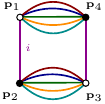} }}\right)\,,
\end{equation}
the factor $2$ in front of the definition arising from the fact that permuting both the black and white nodes do not change the global configuration. The functions $\pi_k^{(b_n^{(i)})}:\mathbb{Z}^n\to\mathbb{R}$ in equation \eqref{vertexstructure} depend only on the $i$-th components of the external momenta. Note that with this respect the gauge invariance imply a strong simplification. Indeed, let us consider a pair $(\textbf{p}_\ell,\textbf{p}_{\ell+1})$ of external momenta associated respectively with a black and a white node which allows to build a $(d-1)$-dipole. Because of the gauge invariance, the external momenta have to satisfy the closure constraints \cite{Benedetti:2015yaa}: $\sum_{j} p_{\ell j}=\sum_{j} p_{\ell+1 j}=0$. Due to the $d-1$ Kronecker delta between the two nodes, all the components with $j=i$ are identified: $p_{\ell j}=p_{\ell+1 j}$, $\forall \, j\neq i$, and finally $p_{\ell i}=p_{\ell+1 i}$. Because two $(d-1)$-dipoles are linked with a single Kronecker delta for a non-branching interaction, then all the components $p_{\ell,i}$ have to be the same, for each black or white node along the ring. The functions $\pi_k^{(b_n^{(i)})}$ depends on a single $i$-colored momenta $\pi_k^{(b_n^{(i)})}:\mathbb{Z}\to\mathbb{R}$ and the other colors being constraint to be equals. This is a very strong simplification  compared
to the unconstrained case, see \cite{Lahoche:2018oeo}. Finally:
\begin{equation}
\Gamma^{(b_n^{(i)})}_k(\{\textbf{p}_\ell\})\Big\vert_{\text{on-shell}}=:\pi_k^{(b_n^{(i)})}(p_{1i})\left(\prod_{\ell\neq i} \delta_{p_{1i}p_{\ell i}}\right)\times\sym\left(\vcenter{\hbox{\includegraphics[scale=0.9]{MelonNN.pdf} }}\right)\,. \label{vertexstructure2}
\end{equation}
\medskip

The zero-momenta values of the effective vertex functions are related with the effective couplings constant at scale $k$. We thus define:
\begin{equation}
\Gamma^{(2)}_k(0)=:m^2(k)\,,\quad \Gamma^{(b_2^{(i)})}_k(\{0\})=:(2!)^2\lambda_4(k)\,.\label{rencond}
\end{equation}
In the non-branching sector moreover, strong relations between vertex functions exists due to the recursive definition of melonic diagrams at the perturbation level. Assuming that the structure of these equations can be analytically continued in the non-perturbative level, the following proposition holds:
\begin{proposition}\label{statementClosed}
For the just-renormalizable quartic melonic model in rank-$6$, the melonic non-branching functions $\pi_k^{(b_m^{(i)})}$, for $m=2,3$ satisfy the following relations:
\begin{equation}
\pi^{(b_2^{(i)})}_k(p)=\frac{Z_4\lambda_4}{1+2Z_4\lambda_4 \mathcal{A}_{2k}(p)}\,,
\end{equation}
and:
\begin{equation}
\pi^{(b_3^{(i)})}_k(p)=16 (\pi^{(b_2^{(i)})}_k(p))^3 \mathcal{A}_{3k}(p)
\end{equation}
where:
\begin{equation}
\mathcal{A}_{nk}(p):=\sum_{\textbf{q}\in \mathbb{Z}^d} \theta_\Lambda(\textbf{q}\,^2) \delta_{pq_i} G^n_k(\textbf{q}\,^2)\,,
\end{equation}
for some cut-off function $\theta_\Lambda(\textbf{q}\,^2)$. Note that $\Lambda$ refers to the initial $UV$ scale of the RG flow introduced in section \eqref{sec2.3}.
\end{proposition}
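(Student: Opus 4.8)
The plan is to establish both identities by exploiting the recursive structure of melonic non-branching diagrams (Definitions \ref{defmelons} and \ref{defnonbranch}) together with the strong kinematical simplification imposed by the closure constraint, and then to promote the resulting perturbative recursions to the nonperturbative effective vertices as the proposition permits. The starting observation is that, in the symmetric phase and on the non-branching sector, every effective vertex $\Gamma_k^{(b_n^{(i)})}$ collapses to the single-variable amplitude $\pi_k^{(b_n^{(i)})}(p)$ through the on-shell factorization \eqref{vertexstructure2}: all colors $j\neq i$ are frozen by the $(d-1)$ Kronecker deltas, and the closure condition forces the $i$-th momenta along the ring to coincide. Hence it suffices to track the single color-$i$ momentum $p$, and all combinatorics is fixed by the $\sym$ normalization of \eqref{vertexstructure2} and the renormalization conditions \eqref{rencond}.

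For $\pi_k^{(b_2^{(i)})}$, first I would isolate the elementary one-loop kernel obtained by gluing two effective four-point vertices through a $(d-1)$-dipole. Because the two internal lines run along the color-$i$ channel and their $i$-momentum is pinned to the external value $p$ by the closure constraint, this loop evaluates to the bubble $\mathcal{A}_{2k}(p)=\sum_{\textbf{q}}\theta_\Lambda(\textbf{q}^2)\,\delta_{pq_i}\,G_k^2(\textbf{q}^2)$, the two powers of $G_k$ corresponding to the two propagators of the rung. The melonic four-point function is then the sum of all such chains, i.e.\ a geometric series; equivalently it satisfies the Bethe--Salpeter-type self-consistency relation
\begin{equation}
\pi_k^{(b_2^{(i)})}(p)=Z_4\lambda_4-2Z_4\lambda_4\,\mathcal{A}_{2k}(p)\,\pi_k^{(b_2^{(i)})}(p)\,,
\end{equation}
where the factor $2$ is the symmetry weight produced by $\sym$ and the head vertex $Z_4\lambda_4$ is the bare value fixed by \eqref{rencond}. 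Solving for $\pi_k^{(b_2^{(i)})}$ yields the claimed closed form.

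For $\pi_k^{(b_3^{(i)})}$, I would use the recursion $b_3^{(i)}=(\mathfrak{R}_i)^2 b_1$ to identify the leading contribution to the six-point non-branching function as a ring of three effective four-point vertices joined by three internal lines along the color-$i$ channel. Each vertex contributes a factor $\pi_k^{(b_2^{(i)})}(p)$, which explains the cube, while the three ring propagators, again momentum-locked by the closure constraint, assemble into $\mathcal{A}_{3k}(p)$. The rational prefactor $8/3$ is then obtained by counting the inequivalent contractions compatible with the non-branching $b_3$ boundary, weighted by the $\sym$ normalization; a bookkeeping check that the resulting diagrams close on the non-branching sector (no branching one-loop insertions are generated) completes the derivation, as anticipated in the text.

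The hard part will be twofold. Combinatorially, the delicate step is extracting the exact numerical factors — the $2$ in the four-point equation and especially the $8/3$ in the six-point relation — since these require reconciling the symmetrization convention of \eqref{vertexstructure2}, the normalization \eqref{rencond}, and the counting of admissible Wick contractions forced by the closure constraint. Conceptually, the more serious obstacle is the step of \emph{analytic continuation}: the geometric and ring resummations are genuine identities only among leading perturbative amplitudes, so promoting them to exact relations among the full nonperturbative vertices $\Gamma_k^{(4)}$ and $\Gamma_k^{(6)}$ is precisely the structural assumption granted in the statement. The argument must therefore make explicit where leading-order dominance in the melonic sector is invoked and why subleading (branching) contributions can be consistently discarded.
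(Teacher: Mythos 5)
Your proposal is correct and follows essentially the same route as the paper: the four-point kernel is obtained by geometric (Bethe--Salpeter) resummation of ladder chains built on the bare vertex with full propagators and closure-pinned color-$i$ momentum, the six-point kernel as a ring of three effective four-point vertices closed by a triple-propagator loop giving $\mathcal{A}_{3k}(p)$, and the perturbative-to-nonperturbative promotion is invoked exactly as the paper does. The only substantive difference is bookkeeping: the paper carries out this resummation in the intermediate-field ($\Theta$) representation, where melonic diagrams are trees, the chain/ring structure that you assert (``the sum of all such chains'') is justified by the $2$- and $3$-skeleton lemmas, and the replacement of bare by dressed propagators $G_k$ is made explicit through the Dyson resummation of self-energy insertions along the skeleton corners.
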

\noindent
To simplify the proof, we use standard intermediate field formalism. This representation maps ordinary Feynman diagrams $\mathcal{G}$ to $\Theta(\mathcal{G})$, explicit construction of the map $\Theta$ being recalled in the Appendix \ref{App1}. We call it $\Theta$-representation. Some steps of the derivations are identical to the one for the rank-5 model without closure constraint investigated in \cite{Lahoche:2016xiq}, and we focus essentially on the specificity of the model, especially in regard to the gauge invariance. Even to come on the proof, we recall some relevant definitions and properties of the $\Theta$-representation.
\begin{definition}
Let $\Theta(\mathcal{G})$ be a Feynman graph in the $\Theta$-representation. The length of any path $C=(u\cdots v)$ linking two-loop vertices $u$ and $v$ is equals to the number of edges building the path.
\end{definition}

\begin{definition}
Let $\mathcal{T}\equiv\Theta(\mathcal{G})$ a tree in the $\Theta$-representation, and $\{\ell_1,\cdots,\ell_n\}\subset \mathcal{T}$ be a subset of $n$ leafs. We call $n$-skeleton the minimal length path linking the $n$-leafs.
\end{definition}

\begin{definition}
The topological $n$-skeleton is obtained from the $n$-skeleton by uncoloring all the edges and replacing all the chains made of $2$-valent loop vertices by a single uncolored solid edge.
\end{definition}

\begin{definition}
We call external leaf, a leaf whose tadpole dotted edge is opened.
\end{definition}
\noindent
Figure \ref{figSkeleton} provides an illustration of the concept of $n$-skeleton and topological $n$-skeleton. As recalled in Appendix \ref{App1}, Theorem \ref{ThMelon}, the melonic diagrams are trees in the $\Theta$-representation, and Lemma \ref{propmelonfaces} can be rephrased as:
\begin{lemma}
Let $\mathcal{G}$ be a melonic diagram with $2N$ external edges. Its $\Theta$-representation corresponds to a tree with $N$ external leafs, linking by a monocolored $N$-skeleton.
\end{lemma}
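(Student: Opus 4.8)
The plan is to read both assertions off as a translation of Lemma \ref{propmelonfaces} through the $\Theta$-map, with Theorem \ref{ThMelon} as the entry point. First I would invoke Theorem \ref{ThMelon} to replace $\mathcal{G}$ by the tree $\mathcal{T}=\Theta(\mathcal{G})$, so that the global shape is already fixed and only the decoration by external leaves and by colors remains to be controlled; the whole argument then reduces to matching two dictionaries, the ``dipole/face'' language of Lemma \ref{propmelonfaces} on the side of $\mathcal{G}$ and the ``leaf/skeleton'' language of the $\Theta$-representation. For the count of external leaves I would use that, by Lemma \ref{propmelonfaces}, the $2N$ external nodes of $\mathcal{G}$ assemble into exactly $N$ external $(d-1)$-dipoles. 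Under the $\Theta$-map each such dipole is precisely the local configuration producing a loop vertex carrying an \emph{opened} tadpole dotted edge, i.e. an external leaf in the sense of the definition above, and conversely every external leaf of $\mathcal{T}$ arises from one external dipole; hence $\mathcal{T}$ carries exactly $N$ external leaves.

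For the monocolored skeleton I would exploit the strongest part of Lemma \ref{propmelonfaces}: the $N$ external faces are all of one and the same color, say $i$, and their boundaries pairwise connect external nodes belonging to distinct external dipoles. Under $\Theta$ the boundary of such a color-$i$ external face becomes a path of color-$i$ edges running through $\mathcal{T}$ between the two corresponding external leaves. Since all $N$ external faces share the single color $i$, these paths are all monochromatic of color $i$, and I would then argue that their union is exactly the minimal subtree of $\mathcal{T}$ joining the $N$ external leaves, that is, the $N$-skeleton. It would follow that the $N$-skeleton carries only color-$i$ edges and is therefore monocolored, which is the claim.

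The step I expect to be the main obstacle is the precise $\Theta$-map bookkeeping underlying the second paragraph: one must check that the boundary of each external face really maps to a genuine path of color-$i$ edges inside $\mathcal{T}$ (rather than changing color when it traverses an intermediate loop vertex), and that the $N$ such boundary paths, which connect the leaves pairwise, saturate the whole minimal connecting subtree without omitting an edge or introducing a spurious color. This is where the explicit construction of $\Theta$ recalled in Appendix \ref{App1} has to be used carefully, together with the cyclic arrangement of the $N$ external faces around the boundary. Once the correspondence ``external face of color $i$ $\leftrightarrow$ monochromatic path of color $i$'' is established and these paths are shown to cover the skeleton, both the leaf count and the monochromaticity of the $N$-skeleton follow immediately, completing the reformulation of Lemma \ref{propmelonfaces} in the $\Theta$-representation.
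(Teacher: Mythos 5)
Your proposal is correct and is essentially the paper's own route: the paper introduces this lemma explicitly as a rephrasing of Lemma \ref{propmelonfaces} in the $\Theta$-representation, and its one-line proof defers, exactly as you do, to Theorem \ref{ThMelon} and the Appendix \ref{App1} derivation underlying Lemma \ref{propmelonfaces}. Your dictionary (external $(d-1)$-dipole $\leftrightarrow$ leaf with opened tadpole dotted edge; same-colored external faces $\leftrightarrow$ monochromatic tree paths whose union covers the $N$-skeleton) is precisely the translation the paper leaves implicit.
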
\label{lemmaMelon2}
\noindent
As for lemma \ref{propmelonfaces} the proof follows the derivation of theorem \ref{ThMelon} given in Appendix \ref{App1}.
\begin{figure}
\begin{center}
\includegraphics[scale=1]{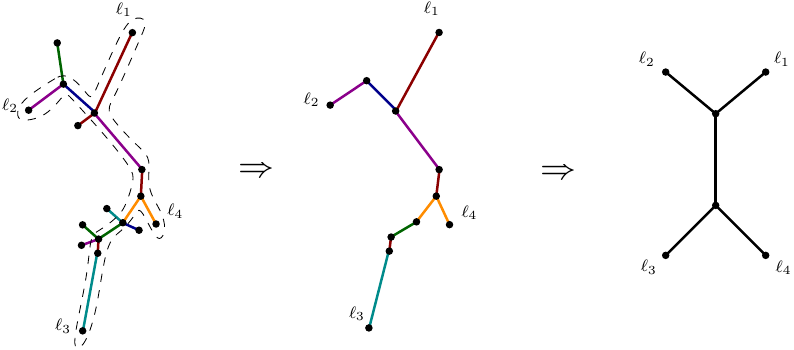}
\end{center}
\caption{On the left: A typical tree with four marked leafs $(\ell_1,\cdots,\ell_4)$. On the middle: the corresponding $4$-skeleton. On the right: The topological $4$-skeleton. }\label{figSkeleton}
\end{figure}

\paragraph{Proof of Proposition \ref{statementClosed}.} Let us consider a typical 1PI Feynman graph contributing to the perturbative expansion to the non-branching melonic vertex function $\pi_k^{(b_2^{(i)})}(p)$. From Lemma \ref{lemmaMelon2}, the $\Theta$-representation of this diagram is a tree with two external leafs linked with a $2$-skeleton $S_i^{(2)}$ of color $i$, with length $[S_i^{(2)}]$. Such a typical tree is pictured in Figure \ref{figtree}. Along the loop vertices building the skeleton, some connected components can be hooked, like $\mathcal{P}_1$ and $\mathcal{P}_2$ on Figure \ref{figtree}. At each corner, all these insertions can be formally resumed from the observation that each insertion is nothing but a piece of the perturbative expansion of the 1PI $2$-point function (self-energy) $\Sigma$. The structure of these local insertions is pictured in Figure \ref{insertions}. Along the corner, the loop is nothing but $C\Sigma^{(1)}C\Sigma^{(3)}C$, where $\Sigma^{(k)}$ denotes the $k$ order term of the self-energy and $C$ the free propagator. Summing over all these terms allowed insertions lead to the formal sum: $C+C\Sigma C+ C\Sigma C \Sigma C + \cdots$, which is the standard Dyson equation for the quantum $2$-point function
\begin{equation}
G=C+C\Sigma G\,.
\end{equation}
As a result, keeping the length $[S_i^{(2)}]=n$ fixed, the sum over all allowed insertions is formally equivalent  to the skeleton (without
insertion),  replacing everywhere the free propagator $C$ by $G$. We denote as $\pi_{nk}^{(b_2^{(i)})}$ this formal sum. In that way, the exact $\pi^{(b_2^{(i)})}_k$ looks like a sum over the length $n$ of the $2$-skeleton:
\begin{equation}
\pi^{(b_4^{(i)})}_k(p)=\sum_{n=1}^\infty\pi_{nk}^{(b_4^{(i)})}(p)\,.
\end{equation}
\begin{figure}
\begin{center}
\includegraphics[scale=1.2]{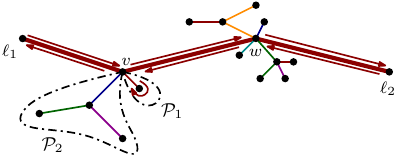}
\end{center}
\caption{A typical tree in the $\Theta$-representation. The red arrows denote the boundaries of the $2$ external monocolored faces. The $2$-skeleton, corresponds to the fat red path $S_i\equiv(\ell_1vw\ell_2)$, of length $[S_i]=3$.}\label{figtree}
\end{figure}
The length $1$ contribution does not depends on $p$, and can be easily computed from standard perturbation theory, and we get:
\begin{equation}
\pi_{1k}^{(b_2^{(i)})}=Z_4 \lambda_4\,.
\end{equation}
In the same way, $\pi_{2k}^{(b_2^{(i)})}$ can be computed from perturbation theory, by replacing at the end of the calculation, $C$ by $G$. We get:
\begin{equation}
\pi_{2k}^{(b_2^{(i)})}=-2Z_4^2\lambda_4^2 \mathcal{A}_{2k}(p)\,,
\end{equation}
For $\pi_{nk}^{(b_2^{(i)})}$, a direct inspection show that it must take the form:
\begin{equation}
\pi_{nk}^{(b_2^{(i)})}= \alpha_n(-1)^{n+1}(Z_4)^{n} \lambda_4^n \left(\mathcal{A}_{2k}(p)\right)^{n-1}\,, \label{decompp}
\end{equation}
where $\alpha_n$ is a purely numerical factor counting the number of configurations, and can be easily computed from perturbation theory, up to the replacement $C\to G$ at the end of the calculation. We have a factor $1/n!$ arising from the exponential, which is cancelled by the permutation of the identical $n$ vertices, say $n!$. Finally, there are two allowed configurations for each of them, leading to $2^n$. The permutation of external edges being given by the factor $4$ in \eqref{decompp}, we have $\alpha_n=2^{n-1}$, and the formal sums of $\pi_{nk}^{(b_4^{(i)})}$'s leads to:
\begin{equation}
\pi^{(b_2^{(i)})}_k(p)=\frac{Z_4\lambda_4}{1+2Z_4\lambda_4 \mathcal{A}_{2k}(p)}\,.
\end{equation}

\begin{figure}
\begin{center}
\includegraphics[scale=1]{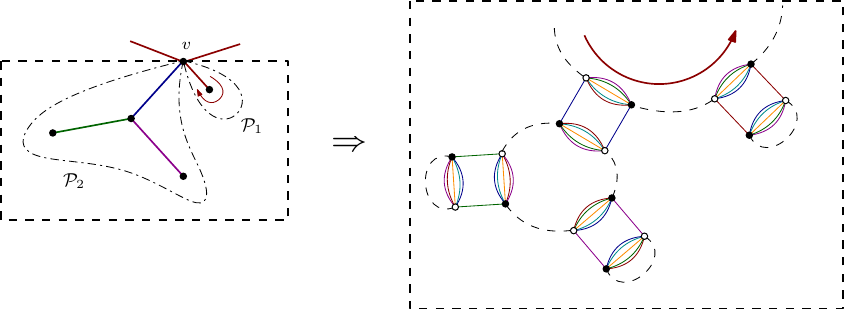}
\end{center}
\caption{Structure of the local insertions along the $2$-skeleton.}\label{insertions}
\end{figure}
Now, let us consider the $6$-point kernel $\pi^{(b_3^{(i)})}_k(p)$. From lemma \ref{lemmaMelon2}, for each Feynman graphs involved in its perturbative expansion, it must exists a $3$-skeleton of color $i$, building a three-armed path between the ending leafs $\ell_1, \ell_2$, $\ell_3$. Moreover, a direct inspection show that all the graphs must have the same topological $3$-skeleton, pictured on Figure \ref{figTopos}. Each arm is nothing but a $2$-skeleton, and the corresponding component on the full diagram can be identified with a contribution to the perturbative expansion of $\pi^{(b_2^{(i)})}_k(p)$, such that each arms can be formally resumed as an effective vertex $\pi^{(b_2^{(i)})}_k(p)$, all of them hooked to the central loop of length three, where the three arms are hooked. As for the other vertices, each corners of this effective vertex can be formally resumed as an effective propagator $G$, and we arrive to the conclusion that:
\begin{equation}
\pi^{(b_3^{(i)})}_k(p)= 8\beta (\pi^{(b_2^{(i)})}_k(p))^3 \mathcal{A}_{3k}(p)\,.
\end{equation}
The remaining numerical factor $\beta$ can be computed from perturbation theory, replacing $C$ by $G$ at the end of the calculation; and it is easy to check that $\beta=2$. For a detailed derivation, see \cite{Lahoche:2019orv}.
\begin{figure}
\begin{center}
\includegraphics[scale=0.8]{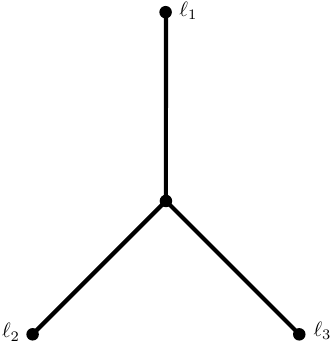}
\end{center}
\caption{The topological $3$-skeleton.}\label{figTopos}
\end{figure}

\begin{flushright}
$\square$
\end{flushright}

This proposition allows to close the infinite hierarchical system of equations obtained by expanding equation \eqref{Wetterich} in terms of vertex functions. If we choose the renormalization conditions for mass and quartic coupling such that:
\begin{equation}
\Gamma^{(2)}_{k=0}(\textbf{p}^2)=m_r^2+\textbf{p}^2+\mathcal{O}(\textbf{p}^2)\,,\quad \pi^{(b_2^{(i)})}_{k=0}(0)=4\lambda_{4}\,, \label{eqrenorm}
\end{equation}
we have the following corollary
\begin{corollary}
Defining the quartic coupling and mass renormalization according to \eqref{eqrenorm}, we have formally:
\begin{equation}
Z_4:=\frac{1}{1-2\lambda_4 \mathcal{A}_{20}(p=0)}\,.
\end{equation}
\end{corollary}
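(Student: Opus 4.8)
The plan is to read off the result directly from the closed-form expression for the four-point kernel established in Proposition \ref{statementClosed}, evaluated at the renormalization point. Concretely, I would start from
\begin{equation}
\pi^{(b_2^{(i)})}_k(p)=\frac{Z_4\lambda_4}{1+2Z_4\lambda_4 \mathcal{A}_{2k}(p)}\,,
\end{equation}
which already resums the whole non-branching melonic tower into a single rational function of the data $(Z_4,\lambda_4)$ and of the one-loop sum $\mathcal{A}_{2k}(p)$. Since the renormalization conditions \eqref{eqrenorm} are imposed in the deep infrared $k=0$ and at vanishing external momentum, the first step is simply to specialize this identity to $k=0$, $p=0$, producing $\pi^{(b_2^{(i)})}_{0}(0)=Z_4\lambda_4/(1+2Z_4\lambda_4 \mathcal{A}_{20}(0))$.

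Next I would impose the renormalization condition \eqref{eqrenorm}, which fixes the infrared value of the quartic kernel to the renormalized coupling $\lambda_4$. Equating the two expressions yields a single scalar equation in which $Z_4$ is the only unknown ($\lambda_4$ and $\mathcal{A}_{20}(0)$ being treated as given data); it becomes linear in $Z_4$ once the denominator is cleared. Cancelling the common factor of $\lambda_4$ and rearranging,
\begin{equation}
1+2Z_4\lambda_4 \mathcal{A}_{20}(0)=Z_4 \quad\Longrightarrow\quad Z_4\big(1-2\lambda_4 \mathcal{A}_{20}(0)\big)=1\,,
\end{equation}
from which the claimed formula follows immediately by division. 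The algebra itself is elementary; the only real content is that Proposition \ref{statementClosed} has already collapsed the infinite hierarchy into this one rational relation, so that the corollary is essentially its infrared shadow.

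The steps I expect to require the most care are bookkeeping rather than estimation. First, one must be consistent about the bare-versus-renormalized meaning of $\lambda_4$ and about the combinatorial normalizations (the $(2!)^2$ factor in \eqref{rencond} and the $\sym$ factor in \eqref{vertexstructure2}), since a misplaced numerical constant would rescale $Z_4$; the cancellation of the common $\lambda_4$ relies on these conventions being matched across \eqref{eqrenorm} and Proposition \ref{statementClosed}. Second, because the statement is only asserted \emph{formally}, I would flag that $\mathcal{A}_{20}(p=0)$ is defined through the regulated sum with cut-off $\theta_\Lambda(\textbf{q}\,^2)$, so that setting $k=0$ and $p=0$ is legitimate only after this regularization, and that the geometric resummation underlying the Proposition is assumed to analytically continue to the infrared endpoint. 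No genuinely hard estimate is needed beyond these consistency checks.
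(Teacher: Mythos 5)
Your proposal is correct and is exactly the paper's (implicit) argument: the corollary is stated there without separate proof, as the immediate consequence of evaluating Proposition \ref{statementClosed} at $k=0$, $p=0$ and imposing the renormalization condition, which is precisely your computation. Your reading of the condition as $\pi^{(b_2^{(i)})}_{k=0}(0)=\lambda_4$ (the $(2!)^2$ being the normalization of the full vertex $\Gamma^{(b_2^{(i)})}$ in \eqref{rencond}, not of the kernel $\pi$) is the one consistent with the stated formula --- taking \eqref{eqrenorm} literally with the factor $4$ would instead yield $Z_4=4/\bigl(1-8\lambda_4 \mathcal{A}_{20}(0)\bigr)$ --- so the normalization caveat you flag is exactly the right one.
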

We assume to work at first order in the derivative expansion for $\Gamma_{k}^{(2)}(\textbf{p}^2)$,
\begin{equation}
\Gamma_{k}^{(2)}(\textbf{p}^2)=m^2(k)+Z(k) \textbf{p}^2 \Longleftrightarrow \frac{\partial}{\partial \textbf{p}^2} \Gamma_{k}^{(2)}(\textbf{p}^2)=Z(k)\,, \label{truncation2pts}
\end{equation}
i.e. we truncate around the marginal contributions with respect to the power-counting. We define the anomalous dimension $\eta$ as:
\begin{equation}
\eta(k):= k\frac{d}{dk}\ln(Z(k))\,,
\end{equation}
and we choose to work with the  standard Litim's  regulator \cite{Litim:2000ci}-\cite{Litim:2001dt}:
\begin{equation}
r_k(\textbf{p}^2):=Z(k)(k^2-\textbf{p}^2)\theta(k^2-\textbf{p}^2)\,.
\end{equation}
Note that such a definition makes sense only if the boundary conditions for $k=0$ and $k=\Lambda$ hold. In particular, $\lim_{k\to \Lambda} r_k\sim \Lambda^r$ for $r>0$. Let us consider a fixed point $p$, with anomalous dimension $\eta_p$. In that case $r_{k=\Lambda} \sim \Lambda^{2+\eta_p}$, and we have the physical bounds:
\begin{equation}
\eta_p>-2\,.\label{boundeta}
\end{equation}
We may derive the flow equations in the non-branching melonic sector, assuming the truncation \eqref{truncation2pts} for the $2$-point function. From proposition \ref{statementClosed}, the hierarchy closes in the relevant sector spanned by the dimensionless couplings:
\begin{equation}
\bar{m}^2:=k^{-2}Z^{-1} m^2\,,\qquad \bar{\lambda}_4=:Z^{-2}\bar{\lambda}_4\,. \label{dimensionless}
\end{equation}
Taking the second derivative with respect to $M_\textbf{p}$ and $\bar{M}_\textbf{p}$, we get:
\begin{equation}
k \frac{d}{dk} \Gamma^{(2)}_k(\textbf{p}^2)=-\sum_{\textbf{q}} k\frac{dr_k}{dk}(\textbf{q}^2) \Gamma_k^{(4)}(\textbf{p},\textbf{p},\textbf{q},\textbf{q}) G_k^2(\textbf{q}^2)\,,
\end{equation}
where we introduced the notation
\begin{equation}
\dot{X}:=k \frac{d}{dk}X\,.
\end{equation}
It is suitable to use a graphical representation to materialize the different contributions involved in this equation. In that way we assume that $\Gamma_k^{(2n)}$ decomposes as a sum indexed with bubbles $b$:
\begin{equation}
\Gamma_k^{(2n)}=\sum_b \, \Gamma_{k,b}^{(2n)}\,.
\end{equation}
The bubble $b$ labelling the component $\Gamma_{k,b}^{(2n)}$ being nothing but the boundary graph of the Feynman graphs involved in the perturbative expansion of $\Gamma_{k,b}^{(2n)}$. In that way, the relevant contribution for $\dot{\Gamma}_{k}^{(2)}$ reads as:
\begin{equation}
\dot{\Gamma}_{k}^{(2)}(\textbf{p}^2)=-\sum_{i=1}^d\, \vcenter{\hbox{\includegraphics[scale=1]{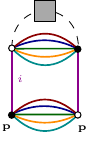}}}\,. \label{eqflow2pts}
\end{equation}
Also, we get for the $4$-point vertex function $\Gamma_k^{(b_2^{(i)})}$:
\begin{equation}
\dot{\Gamma}_k^{(b_2^{(i)})}(\textbf{p}_1,\textbf{p}_2,\textbf{p}_3,\textbf{p}_4)=-\vcenter{\hbox{\includegraphics[scale=0.8]{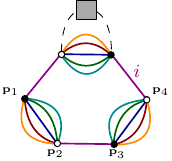}}}+\left( \vcenter{\hbox{\includegraphics[scale=0.8]{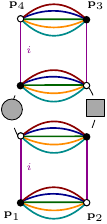}}}+(\textbf{p}_1 \leftrightarrow \textbf{p}_3;\textbf{p}_2 \leftrightarrow \textbf{p}_4 )\right)\,,\label{flowcoupling0}
\end{equation}
where the grey squares and discs materialize respectively the propagators $\dot{r}_k G_k^2$ and the two point function $G_k$:
\begin{equation}
\dot{r}_k(\textbf{p}^2) G_k^2(\textbf{p}^2) \equiv \vcenter{\hbox{\includegraphics[scale=1.2]{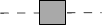}}}\,, \qquad G_k(\textbf{p}^2) \equiv \vcenter{\hbox{\includegraphics[scale=1.2]{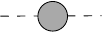}}}\,.
\end{equation}
From the truncation \ref{truncation2pts}, setting $\textbf{p}=0$ in the equation \eqref{eqflow2pts}, we get the flow equation for the mass:
\begin{equation}
\dot{\bar{m}}^2(k)= -\frac{2d \lambda_4(k)Z(k)}{(Z(k)k^2+m^2)^2} \sum_{\textbf{q}\in \mathbb{Z}^{d-1}} \theta(k^2-\textbf{q}^2)\delta\bigg( \sum_{i=1}^{d-1} q_i \bigg) \left[\eta(k)(k^2-\textbf{q}^2)+2k^2 \right]\,.
\end{equation}
Then we focus on the flow equations in the UV sector ($\Lambda \gg k \gg 1$) then it is suitable to replace the sums with integrals. The integration domain corresponds to the intersection between the $(d-1)$-ball of radius $k$ and the hyperplane of equation $\sum_{i=1}^{d-1} q_i=0$. Now let us define:
\begin{equation}
S_n(p^2):= \int d^{d-1} q\,(p^2+\textbf{q}^2)^n \theta(k^2-p^2-\textbf{q}^2)\delta\bigg(p+ \sum_{i=1}^{d-1} q_i\bigg)\,.
\end{equation}
We can straightforwardly compute this integral as the intersection between $p+\sum_{i=1}^{d-1} q_i=0$ and the $(d-1)$-ball of radius $\sqrt{k^2-p^2}$. We get after a calculation:
\begin{equation}
S_n(p^2)=\frac{\Omega_{d-2}}{\sqrt{d-1}} \left(k^2-\frac{d}{d-1}p^2 \right)^{\frac{d-2}{2}}\sum_{k=1}^n \binom{n}{k} \left(\frac{d}{d-1}p^2 \right)^{n-k}\frac{1}{2k+d-2}\left(k^2-\frac{d}{d-1}p^2 \right)^k\,,
\end{equation}
where
$\Omega_{d}:=\frac{\pi^{d/2}}{\Gamma(d/2+1)}$ denotes the volume of the unit $d$-ball. In particular:
\begin{equation}
S_0(p^2)=\frac{\Omega_{d-2}}{\sqrt{d-1}}\left(k^2-\frac{d}{d-1}p^2 \right)^{\frac{d-2}{2}}\,,
\end{equation}
and
\begin{equation}
S_1(p^2)=\frac{\Omega_{d-2}}{\sqrt{d-1}}\left(k^2-\frac{d}{d-1}p^2 \right)^{\frac{d-2}{2}} \left[\frac{d}{d-1} p^2+\frac{d-2}{d}\left(k^2-\frac{d}{d-1}p^2 \right) \right]\,.
\end{equation}
Finally:
\begin{equation}
\dot{{m}}^2(k)= -\frac{\Omega_{d-2}k^d}{\sqrt{d-1}}\frac{4d \lambda_4(k)Z(k)}{(Z(k)k^2+m^2)^2} \left(1+\frac{\eta(k)}{d}\right)\,.
\end{equation}
Setting $d=6$, and using dimensionless couplings $\bar{m}^2$ and $\bar{\lambda}_4$ defining in \eqref{dimensionless} rather than $m^2$ and $\lambda_4$, we get ($\Omega_4=\pi^2/2$):
\begin{equation}
\boxed{
\beta_{m^2}=-(2+\eta)\bar{m}^2-\frac{\pi^2}{\sqrt{5}}\frac{12 \bar{\lambda}_4(k)}{(1+\bar{m}^2)^2} \left(1+\frac{\eta(k)}{6}\right)\,,}\label{betam}
\end{equation}
where we introduced the notation $\beta_X:=k \frac{d}{dk} \bar{X}$. In the same way, setting: $\textbf{p}_1=\textbf{p}_2=\textbf{p}_3=\textbf{p}_4=0$, from equation \eqref{rencond} and proposition \ref{statementClosed}, we get straightforwardly:
\begin{equation}
4\dot{\lambda}_4=-24 \frac{\Omega_{d-2}k^d}{\sqrt{d-1}}\frac{\pi_k^{(b_3^{(i)})}(0)Z(k)}{(Z(k)k^2+m^2)^2} \left(1+\frac{\eta(k)}{d}\right)+8\lambda_4^2\frac{\Omega_{d-2}k^d}{\sqrt{d-1}} \frac{1+\frac{\eta}{d}}{(Zk^2+m^2)^3}\,,
\end{equation}
which is expressed in terms of the dimensionless parameters as:
\begin{equation}
\beta_{\lambda_4}=-2\eta\bar{\lambda}_4-\frac{6\Omega_{d-2}}{\sqrt{d-1}}\frac{\bar{\pi}_k^{(b_3^{(i)})}(0)}{(1+\bar{m}^2)^2} \left(1+\frac{\eta}{d}\right)+8\bar{\lambda}_4^2 \frac{\Omega_{d-2}}{\sqrt{d-1}} \frac{1+\frac{\eta}{d}}{(1+\bar{m}^2)^3}\,.
\end{equation}
To compute $\bar{\pi}_k^{(b_3^{(i)})}$ we need to estimate the sum $\mathcal{A}_{3k}(0)$. This sum is not constrained in the windows of momenta allowed by the regulator $\dot{r}_k$. In our previous investigation we showed that the integrals which are superficially convergent can be computed using the same approximations as used to compute flow integrals, and checked consistency with WT identities \cite{Lahoche:2018oeo}.  Here, we have no Ward identities to support such an approximation, but we assume its validity again. Explicitly the sum reads:
\begin{equation}
\mathcal{A}_{3k}(0)=\sum_{q\in \mathbb{Z}^5}\delta\left(\sum_i q_i\right) \left[\frac{\theta_k(q^2)}{(Zk^2+m^2)^3}+\frac{\theta_\Lambda(q^2)-\theta_k(q^2)}{(Zp^2+m^2)^3}\right]\,.
\end{equation}
We define the dimensionless quantity as $\bar{\mathcal{A}}_{3k}$ as:
\begin{equation}
\bar{\mathcal{A}}_{3k}(0)=Z^3 k^{-2} {\mathcal{A}}_{3k}(0)\,,
\end{equation}
and we get:
\begin{equation}
\bar{\mathcal{A}}_{3k}(0)=\frac{\pi^2}{2\sqrt{5}}\frac{1}{1+\bar{m}^2} \left[ \frac{1}{(1+\bar{m}^2)^2}+\left(1+\frac{1}{1+\bar{m}^2}\right) \right]\,.
\end{equation}
Then  
\beq
\bar{\pi}_k^{(b_3^{(i)})}(0)=16(\lambda_4)^3Z^3k^{-2}\bar{\mathcal{A}}_{3k}(0).
\eeq
Therefore $\beta_{\lambda_4}$ reads as:
\begin{equation}
\boxed{
\beta_{\lambda_4}=-2\eta\bar{\lambda}_4+\frac{4\pi^2\bar{\lambda}_4^2}{\sqrt{5}}\frac{1+\frac{\eta}{6}}{(1+\bar{m}^2)^3}\left[1-\frac{12\pi^2\bar{\lambda}_4}{\sqrt{5}}\left( \frac{1}{(1+\bar{m}^2)^2}+\left(1+\frac{1}{1+\bar{m}^2}\right) \right)\right]\,.}\label{betalambda}
\end{equation}
The remaining investigation is the computation of the anomalous dimension $\eta$. An explict expression can be derived from the flow equation \eqref{eqflow2pts}. Because of the definition \eqref{truncation2pts}, taking derivative with respect to $p_1^2$ and setting $\textbf{p}=0$, we get:
\begin{equation}
\dot{Z}=-\frac{4Z\Omega_4 k^6}{\sqrt{5}}\frac{1+\frac{\eta}{6}}{(Zk^2+m^2)^2}\frac{d\pi_k^{(b_2^{(1)})}}{dp_1^2}\bigg\vert_{p_1=0}-\frac{2\lambda_4}{(Zk^2+m^2)^2}\left[\eta(k^2S_0^\prime(0)-S_1^\prime(0))+2k^2S_0^\prime(0)\right]\,.\label{derivvertex}
\end{equation}
The derivatives of $S_1$ and $S_0$ are:
\begin{equation}
S_0^\prime(0)=-\frac{6\pi^2}{5\sqrt{5}}k^2\,, \quad S_1^\prime(0)=-\frac{3\pi^2}{5\sqrt{5}}k^4\,,
\end{equation}
and:
\begin{equation}
\eta(k^2S_0^\prime(0)-S_1^\prime(0))+2k^2S_0^\prime(0)=-k^4\eta \frac{3\pi^2}{5\sqrt{5}}-k^4\frac{12\pi^2}{5\sqrt{5}}\,.
\end{equation}
The term involving a derivative of the vertex in \eqref{derivvertex} comes from the EVE, which takes into account the full momentum dependence of the effective vertices. From proposition \ref{statementClosed}, we get:
\begin{equation}
\frac{d\pi_k^{(b_2^{(1)})}}{dp_1^2}\bigg\vert_{p_1=0}=-2\lambda_4^2(k) \frac{d}{d p_1^2}\mathcal{A}_{2k}(p_1=0)\,.
\end{equation}
We then compute this derivative to obtain a tractable expression for the anomalous dimension. Because the derivative is a superficially convergent quantity, we use of the same approximation as for $\mathcal{A}_{3k}$, using ansatz \eqref{truncation2pts}. Once again, for the models without closure constraint, this approximation is controlled by the WT identities \cite{Lahoche:2018oeo}. We expect that the same approximation makes sense here, in regard to the convergence of the integral. We get explicitly:
\begin{equation}
\frac{d}{d p_1^2}\mathcal{A}_{2k}(p_1=0)=\frac{1}{\sqrt{5}}\frac{\pi^2}{k^2Z^2}\frac{1}{1+\bar{m}^2}\left(1+\frac{1}{1+\bar{m}^2}\right)\,.
\end{equation}
Solving \eqref{derivvertex} for $\eta$, we thus obtain:
\begin{equation}\label{eta}
\boxed{
\eta=\frac{24\pi^2\bar{\lambda}_4}{5\sqrt{5}} \frac{(1+\bar{m}^2)^2-\frac{\pi^2\sqrt{5} \bar{\lambda}_4}{6}(2+\bar{m}^2)}{(1+\bar{m}^2)^4-\frac{6\pi^2 \bar{\lambda}_4}{5\sqrt{5}}(1+\bar{m}^2)^2+\frac{\pi^4\bar{\lambda}_4^2}{15}(2+\bar{m}^2)}\,.}
\end{equation}

\subsection{Fixed point solutions}
The autonomous system given by equations \eqref{betam}, \eqref{betalambda} and \eqref{eta} can be solved numerically. In addition to the Gaussian fixed point (GFP) We find three real fixed points:
\begin{equation}
FP1:=\{\bar{m}^2\approx 0.30, \bar{\lambda}_4\approx 0.34 \}\,,
\end{equation}
\begin{equation}
FP2=\{\bar{m}^2\approx--0.16, \bar{\lambda}_4\approx -0.25 \}\,,
\end{equation}
\begin{equation}
FP3=\{\bar{m}^2\approx 0.32, \bar{\lambda}_4\approx -0.02\}\,.
\end{equation}
However, none of these fixed points are physically relevant. Indeed, the anomalous dimensions of the two first ones are respectively $\eta_{1}\approx -5.4$ and $\eta_2\approx -5.8$, under the physical bound \eqref{boundeta}. The third fixed point FP3 has anomalous dimension $\eta_3 \approx -0.25$, but the coupling constant is negative, and therefore has the wrong sign with respect to the stability requirement. 
We therefore arrive at different conclusions from \cite{Benedetti:2015yaa}. The authors of the paper argued for the existence of a non-Gaussian fixed point of the Wilson-Fisher type in the UV regime, reminiscent of a second order phase transition. The EVE on the other hand shows that this fixed point must be an artifact of the truncation. In the next section we will confirm this conclusion by considering larger truncations than the one of these authors, showing clearly the instability of the fixed point solution. 
\medskip

Despite the disappearance of the non-perturbative fixed point, the theory is asymptotically free, and around the Gaussian fixed point we get:
\begin{equation}
\beta_{\lambda_4}=-\frac{28\pi^2\bar{\lambda}_4^2}{5\sqrt{5}}\,.
\end{equation}
Note that there is moreover a strong improvement using EVE method with respect to the results given in \cite{Benedetti:2015yaa}, which used ordinary vertex expansion, without taking into account the momentum dependence of the vertex. From this method, the anomalous dimension exhibits a singularity above the singularity $\bar{m}^2=-1$, while it is moved below using EVE. Moreover, the fixed point FP2 was localised above $\bar{m}^2=-1$ but below the anomalous dimension singularity. Using EVE, we show that this fixed point is pushed below the singularity $\bar{m}^2=-1$, and disconnected from the Gaussian region.

\subsection{Convergence of the vertex expansion}
In this section we investigate the convergence of the ordinary vertex expansion in the subregion of the full theory space spanned with non-branching melons. We denote as $\lambda_{2n}$ the coupling associated to the non-branched melon with $n$ black nodes ($\lambda_2\equiv m^2$). In \cite{Carrozza:2016tih} the authors showed that the corresponding $\beta$-functions can be recursively defined, and:
\begin{equation}
\beta_{\lambda_{2n}}=(2n-4-n\eta)\bar{\lambda}_{2n}+\frac{\pi^2}{\sqrt{5}}\left(1+\frac{\eta}{6}\right)\sum_{k=1}^n (-1)^k \frac{1}{(1+\bar{m}^2)^{k+1}}\sum_{\{n_{2q}\}\in\mathcal{D}_{k,p}}\frac{k!}{\prod_{q\geq 2} n_{2q}!}\prod_{q\geq 2} (q\bar{\lambda}_{2q})^{n_{2q}}\,.
\end{equation}
In that sum, the set $\mathcal{D}_{k,p}$ denotes the set of integer solutions for the two constraints:
\begin{equation}
\sum_{q>1} n_{2q}=k\,,\qquad \sum_{q>1} q n_{2q}=p+k\,.
\end{equation}
We provide the results up to order $\phi^{12}$ in the truncation. The resulting equations can be investigated numerically, and we obtain many fixed point solutions. First of all, up to order $4$, we get two non-reliable fixed point solutions reminiscent of $FP2$ and $FP3$, having a large and negative anomalous dimension $\eta \lesssim -7$. A fixed point having the same characteristics as the fixed point discovered in \cite{Benedetti:2015yaa}, say FP0 is also recovered to all orders, excepts notably for $\phi^8$ truncation, where no reliable fixed point is found. Note that such instability was noticed in \cite{Carrozza:2016tih}-\cite{Carrozza:2017vkz}. The results about this fixed point are summarized in Table \ref{tab1}, where we indicate the values of the anomalous dimension, of the relevant coupling and the relevant direction toward IR scales.
Note moreover that, except for the discontinuity for octic truncation, the remaining critical exponents do not vary too, two of them becoming complex  from order $\phi^{10}$ on. Finally, another fixed point appears up to order $\phi^{10}$, which has two relevant directions toward IR scales. The corresponding anomalous dimension is of order $1$. However, due to its disappearance for lower truncation, it is difficult to see it other than as an anomaly linked to the truncation.

\begin{center}
\begin{tabular}{|C{4cm}||C{1cm}|C{1.5cm}|C{1.5cm}|C{1.5cm}|C{1.5cm}|}
\hline Truncation order & 4 & 6 & 8 & 10 & 12 \\
\hline $\eta$ & 0.67 & 1.10 & $\emptyset$ & 0.60 & 0.60 \\
\hline
$\theta_1$ & 0.90 & 1.01 & $\emptyset$ & 1.95& 1.98\\
\hline
$u_2$ & -0.60 & -0.82 & $\emptyset$ & -0.55 & -0.55 \\
\hline $u_4$ & 0.004 & 0.0012 & $\emptyset$ & 0.004 & 0.005\\
\hline
\end{tabular}
\captionof{table}{The characteristics of the fixed point FP0 in the melonic non-branching sector up to order $\phi^{12}$ in the truncation. $\theta_1$ denotes the critical exponent corresponding to the single relevant direction toward IR scales.}
\label{tab1}
\end{center}

 Note that, it is possible to study the dependence of the regulator in  Table \eqref{tab1}, as a continuation of our recent results \cite{Lahoche:2019ocf}-\cite{Lahoche:2020pjo}. This requires an analysis of the effect of closure constraint, but also the definition and   the choice of optimal regulators adopted for such a model. This can be the purpose of  a future investigation. 

\section{Discussion and conclusion} \label{sec4}
In this paper, we investigated a non-perturbative solution of the exact RG equation for a rank-6 TGFT with a closure constraint. We focused on the symmetric phase, and along a subsector of the full theory space, spanned with non-branching melons. Using the non-trivial relations between effective vertices which holds along the RG trajectories and corresponding to the just-renormalizable model, we may be able to close the hierarchical RG flow equations around quartic interactions, keeping the full momenta dependence of effective vertices through an improved version of the EVE. This non-trivial improvement for standard vertex expansion has two relevant consequences: (i) The announced fixed-point solution in \cite{Benedetti:2015yaa} disappears. (ii) The anomalous dimension is non-singular in the region $\bar{m}^2>-1$.
\medskip

The particular aspect of this work comes from a novelty of the method used to derive the flow equation. Indeed, local truncations, by ignoring momentum dependence of effective vertices, we exhibits a singularity line \cite{Benedetti:2015yaa} above the singularity $\bar{m}^2=-1$. In that sense, EVE provides a maximal extension of the explored region, bounded from below with the singularity $\bar{m}^2=-1$, which may be physically interpreted as a consequence of the zero-field expansion (symmetric phase). Moreover, while standard vertex expansion provides similar results at low orders, higher-order contributions deviate significantly or are pathological (as was the case for the octic truncation). As we had pointed out in the conclusion of our previous analysis \cite{Lahoche:2018ggd}, no WT identity violation is expected, as it was the case for models without closure constraint, and this becomes an important indication in favor of geometrical inputs arising from canonical quantum gravity for discrete quantum gravity models like GFTs. \\

\section*{Acknowledgments}
 The authors are grateful to the referee for his useful comments and remarks that allowed to improve the paper.

\appendix
\begin{center}
\Large{\textbf{Appendix}}
\end{center}
\section{Leading order, just-renormalizability and canonical dimension}\label{App1}

In this section we introduce the standard intermediate field formalism and investigate some basics properties of the leading order (melonic) sector. The intermediate field representation is introduced as a “trick" coming from basics properties of the Gaussian integration, which allows breaking a quartic interaction as a three-body interaction for two fields; the archetypal example of such a strategy being provided by the $W$ and $Z$ bosons of the weak nuclear interactions. However, to simplify the presentation, we introduce the intermediate field decomposition as a one-to-one correspondence between Feynman graphs, following
\cite{Lahoche:2015yya}-\cite{Lahoche:2015zya} and references therein. Let us consider a vacuum graph $\mathcal{G}$. For the quartic model that we consider, such a graph looks like a connected set of quartic melonic interactions, whose black and white nodes are linked with some closed dotted edges. The intermediate field correspondence work is as follow: To each closed-loop build as a closed cycle of dotted edges, we associate a vertex, whose valence (i.e. the number of corners) equals the length of the loop (i.e. the number of dotted edges building it). We call them \textit{loop vertices}. The original vertices are of six different types, each of them being labeled with a color $i\in\{1,\cdots,6\}$. Each of these vertices is mapped as a colored link, of the same color as the label $i$ for the original vertex. The procedure may be materialized with a map $\Theta:\mathcal{G}\to \Theta(\mathcal{G})$, where the figure \ref{figMapIntermediate} illustrates the construction on some explicit examples. We shall use the $\Theta$-representation or intermediate field representation to sketch the proof of the following statement:
\begin{theorem}\label{ThMelon}
The 1PI leading order vacuum graphs are trees in the intermediate field representation. In the original representation, these trees are called melonic diagrams. Moreover, the perturbative expansion of the model is power counting just renormalizable.
\end{theorem}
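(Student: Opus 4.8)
The plan is to work entirely in the intermediate-field ($\Theta$) representation, where a connected vacuum graph $\mathcal{G}$ of the quartic model becomes a graph $\Theta(\mathcal{G})$ built from loop vertices (the closed cycles of color-$0$ dotted edges) joined by colored edges (the images of the original quartic interactions). First I would fix the combinatorial dictionary: if $V$ denotes the number of quartic vertices and $V_\Theta$ the number of loop vertices, then $\Theta(\mathcal{G})$ carries $V$ colored edges on $V_\Theta$ nodes, so that its cyclomatic number is $\mathcal{L}_\Theta = V - V_\Theta + 1 \geq 0$, vanishing precisely for trees. Combined with the vacuum relation $L = 2V$ between internal dotted edges and vertices (each quartic node carries one color-$0$ half-edge, all contracted), this records the data needed for the power counting. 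The $1$PI restriction in the original representation is what makes $\Theta(\mathcal{G})$ a genuine (connected) graph rather than a disjoint union, and I would keep it only to exclude the trivially factorized contributions.

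The core step is to re-express the Abelian power counting $\omega(\mathcal{G}) = -2L + D(F - R)$ (with $D = 1$ for $\U(1)$) purely in terms of $\Theta$-data and to show it is monotone in $\mathcal{L}_\Theta$. Concretely, I would bound the number of faces $F$ at fixed $V$ by the standard colored-graph face-counting inequality, whose defect from the maximum is Gurau's degree; in the intermediate-field picture this degree is carried by $\mathcal{L}_\Theta$, so that $F$ is maximal exactly for trees. The genuinely new ingredient, and the step I expect to be the main obstacle, is the careful evaluation of the rank $R$ of the incidence matrix $\epsilon_{ef}$ in the presence of the closure constraint: each gauge projector $\hat{P}$ in the propagator imposes one linear relation $\sum_j p_j = 0$ per dotted edge, and one must show these relations are independent along a spanning structure, so that $F$ and $R$ combine into $\omega = \omega_{\max} - c\,\mathcal{L}_\Theta$ with a strictly positive $c = c(d,D)$. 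This is where the Gauss constraint improves the power counting and shifts the just-renormalizable rank from $5$ to $6$, and it must be done without ever inverting the non-invertible kinetic operator.

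Granting the monotonicity, ``leading order'' means $\mathcal{L}_\Theta = 0$, i.e. $\Theta(\mathcal{G})$ is a tree. To identify these trees with melonic diagrams I would argue recursively: a tree is assembled by successively attaching leaves, and each leaf attachment is precisely a $(d-1)$-dipole insertion, namely the operator $\mathfrak{R}_i$ of Definition \ref{defmelons}; conversely every melon is generated this way. This yields the equivalence ``tree in the $\Theta$-representation $\Longleftrightarrow$ melonic diagram'' and, together with Lemma \ref{propmelonfaces}, its refinement to $2N$-point functions, a melonic $2N$-point graph being a tree with $N$ external leaves.

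Finally, for just-renormalizability I would open $N$ external leaves on a leading-order tree and recompute $\omega$ for the resulting $2N$-point amplitude. Tracking how cutting edges modifies $L$, $F$ and $R$, the internal optimum persists and the degree collapses to a function of the number of external legs alone, $\omega = 4 - N$, matching \eqref{powercountingTrue}. Hence only the $N = 2$ (relevant, $\omega = 2$) and $N = 4$ (marginal, $\omega = 0$) functions diverge while all $N > 4$ are irrelevant, which is exactly the statement that the rank-$6$ model is just-renormalizable. The remaining work here is purely bookkeeping on external faces, already prepared by the boundary-graph construction of Definition \ref{defboundary} and Lemma \ref{propmelonfaces}.
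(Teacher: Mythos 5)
Your global architecture does track the paper's Appendix~\ref{App1} proof: pass to the $\Theta$-representation, show trees saturate the divergence degree, identify trees with melons via dipole insertions, then open external legs to land on $\omega=4-N_{ext}$. The gap is in the core step, and it is not a detail that can be deferred: you yourself flag the joint control of $F$ and $R$ as ``the main obstacle'', but that obstacle \emph{is} the theorem. The colored-graph face-counting inequality (Gurau degree) bounds $F$ alone, and in the presence of the closure constraint maximizing $F$ does not imply maximizing $F-R$, because a change in the number of faces can be accompanied by a compensating change of rank. Indeed, in the paper's own bookkeeping one of the two loop-creating moves (move $b$ of Figure~\ref{figrecur}) creates one face \emph{and} raises the rank by one, so $F-R$ is unchanged and the decrease $\delta\omega=-4$ comes entirely from $\delta L=2$; the tree-preserving moves instead give $\delta F=5$, $\delta R=1$, $\delta\omega=0$. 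The paper settles precisely the point you leave open by an induction on the number of colored edges of $\Theta(\mathcal{G})$, enumerating the four elementary ways of adding an edge and computing $(\delta L,\delta F,\delta R)$ for each; without that, or some equivalent joint control of faces and rank, nothing has been established about $F-R$, hence nothing about which graphs are leading order.

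Two further inaccuracies sit in the same step. First, your target identity $\omega=\omega_{\max}-c\,\mathcal{L}_\Theta$ with a single constant $c=c(d,D)$ is false as an equality: the two loop-creating moves lower $\omega$ by different amounts ($3$ and $4$), so the deficit per unit of cyclomatic number is not constant; only an inequality of the type $\omega\leq\omega_{\max}-3\,\mathcal{L}_\Theta$ can hold, which suffices for the statement but again requires exactly the move-by-move computation you skipped. Second, the rank $R$ entering the power counting is not the independence of ``one relation $\sum_j p_j=0$ per dotted edge'': it is the rank of the face-edge incidence matrix $\epsilon_{ef}$, i.e.\ it counts which face holonomy deltas remain independent once the edge gauge variables are integrated, so looking for independence ``along a spanning structure'' of edges is not the relevant combinatorial question. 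The remaining parts of your plan --- leaf attachment as $(d-1)$-dipole insertion including the color zero, hence trees $\Leftrightarrow$ melons; optimal deletion of tadpole edges to create external legs, giving $\omega=2$ and $\omega=0$ for the two- and four-point functions and $\omega=4-N_{ext}$ in general --- coincide with the paper's Appendix~\ref{App1} argument and are fine as sketches, but as written they inherit the unproved core.
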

\paragraph{Proof.} As a preliminary observation, let us note that from construction, the colored propagators $c_i$ associated with colored edges of type $i$ in the $\Theta$-representation does not depend on the color $i$, and looks like a matrix sharing a pair of indices $c_i\equiv \{(c_i)_{p_ip_i^\prime}\}$. However, due to the closure constraint, $(c_i)_{p_ip_i^\prime}$ must be diagonal $(c_i)_{p_ip_i^\prime}\equiv a_i(p_i) \delta_{p_ip_i^\prime}$. This can be checked as follows. Figure \ref{FigDag} show the structure of a node in the $\Theta$-representation. Because of the closure constraint, the momenta $\textbf{p}$ and $\textbf{p}^\prime$ have to satisfy $\sum_i p_i=0$ and $\sum_i p_i^\prime=0$. Moreover, because the $d-1$ colored edges between black and white nodes are Kronecker deltas, they impose $p_j=p_j^\prime$, $\forall\,j\neq i$; implying $p_i=p_i^\prime$. Therefore we shall prove the Theorem \ref{ThMelon} by recurrence. Let us consider the following Lemma:
\begin{lemma}
For any Feynman graph $\mathcal{G}$ in the original representation, with $L$ internal (dotted) edges, $F$ closed faces and $V$ vertices, the divergent degree $\omega(\mathcal{G})$ is given by:
\begin{equation}
\omega(\mathcal{G})=-2L(\mathcal{G})+(F(\mathcal{G})-R(\mathcal{G}))\,,
\end{equation}
where $R(\mathcal{G})$ denotes the rank of the incidence matrix $\epsilon_{ef}$, whose entries are equal to 1 if $e\in \partial f$ or $0$ otherwise\footnote{In principle, for $e\in \partial f$, $\epsilon_{ef}=\pm 1$, depending on the relative orientation of $e$ and $f$. For the bipartite model that we consider however, relative orientation can be fixed unambiguously, such that $\epsilon_{ef}=\pm 1$.}.\end{lemma}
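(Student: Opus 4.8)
The plan is to estimate the Feynman amplitude $\mathcal{A}_{\mathcal{G}}$ directly in the Fourier representation, where it is a sum over internal momenta of a product of propagators, and to read off its degree of divergence from a scaling analysis in a cutoff $\Lambda$. The structural input is the observation already used above: because the closure constraint forces the colored propagators $(c_i)_{p_ip_i^\prime}=a_i(p_i)\delta_{p_ip_i^\prime}$ to be diagonal (see the node of Figure \ref{FigDag}), every bicolored $\{0,j\}$-face carries a single well-defined momentum. I would therefore take as independent summation variables the momenta $p_f\in\mathbb{Z}^D$ attached to the internal faces $f$ (with $D=1$ for $\mathrm{G}=U(1)$), the $j$-th component of the momentum on an internal edge $e$ being the momentum of the face of color $j$ through $e$. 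In these variables,
\begin{equation}
\mathcal{A}_{\mathcal{G}} = \sum_{\{p_f\}}\left(\prod_{e}\delta\Big(\sum_{f}\epsilon_{ef}\,p_f\Big)\right)\prod_{e}\frac{1}{Z_{-\infty}\,\textbf{p}_e^2+Z_2 m_0^2}\,,
\end{equation}
up to finite combinatorial prefactors, where the closure delta $\delta(\sum_j p_{e,j})$ carried by the propagator \eqref{propaprojection} becomes the linear relation $\sum_f\epsilon_{ef}\,p_f=0$, exactly encoded by the incidence matrix $\epsilon_{ef}$.

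First I would make precise that the $L$ delta functions impose only $R:=\mathrm{rank}(\epsilon_{ef})$ independent linear constraints on the $F$ face momenta: solving these leaves $F-R$ genuinely free summation variables, while the remaining $L-R$ constraints are automatically satisfied. The degree of divergence then follows from a uniform scaling estimate. Each of the $F-R$ free face momenta ranges over $\mathbb{Z}^D$ and, cut off at $\Lambda$, contributes a volume factor scaling as $\Lambda^{D}$, whereas each of the $L$ propagators decays as $\textbf{p}_e^{-2}\sim\Lambda^{-2}$ in the deep UV. Multiplying, $\mathcal{A}_{\mathcal{G}}$ scales as $\Lambda^{D(F-R)-2L}$, so that $\omega(\mathcal{G})=-2L(\mathcal{G})+D(F(\mathcal{G})-R(\mathcal{G}))$; specializing to $U(1)$, where $D=1$, yields the claimed formula. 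This reproduces the general $D$-dependent power counting of the Abelian theorem, the term $-DR$ being precisely the improvement due to the Gauss constraint.

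The hard part will be the rank-counting step, namely showing that the only dependencies among the $L$ closure relations are those measured by $R$, so that exactly $F-R$ of the face sums survive and each saturates the full volume $\Lambda^D$ independently. This is where the diagonal structure of the colored propagators is essential: it guarantees that the face momenta are global labels that are constant along each bicolored cycle, which both makes the substitution $p_{e,j}\mapsto p_f$ unambiguous and ensures the scaling estimate is uniform over the family of graphs entering the recurrence of Theorem \ref{ThMelon}. Once the exponent is identified, the recurrence on dipole insertions can proceed as planned.
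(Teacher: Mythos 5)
Your derivation is correct in substance, but note that it is not the paper's route: the paper gives no proof of this lemma at all, deferring it to \cite{Lahoche:2015ola}, where the result is established by multi-scale decomposition. What you do instead is direct momentum-space counting: each closed face carries one conserved momentum $p_f$ (this is where the diagonal structure forced by the closure constraint enters), the closure delta on a dotted edge $e$ becomes the linear relation $\sum_f \epsilon_{ef}\,p_f=0$, i.e.\ a row of the incidence matrix, and the $L$ rows cut the $F$ face sums down to a solution lattice of rank $F-R$; the region where all surviving momenta are of order $\Lambda$ then contributes a volume $\Lambda^{D(F-R)}$ against a propagator decay $\Lambda^{-2L}$, giving $\omega=-2L+D(F-R)$, hence the stated formula for $D=1$. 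This is the combinatorial core of the power counting, and it has the virtue of making explicit why the Gauss constraint improves the counting by exactly $-DR$, something the bare citation leaves opaque. Your treatment of the redundant deltas is also sound: a rationally dependent row of an integer matrix vanishes identically on the common kernel of the independent rows, so the remaining $L-R$ Kronecker deltas equal $1$ on the solution lattice, and no spurious factors arise.

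Two remarks on where the real work sits, since your sketch locates it in the wrong place. The rank-counting step you call ``the hard part'' is elementary linear algebra and needs no further input. The genuine gap is the uniform scaling estimate: over the full summation domain you cannot bound each propagator by $\Lambda^{-2}$, since nothing prevents some face momenta from being small, so ``volume times uniform decay'' is a dominant-region (lower-bound) heuristic rather than an upper bound on the cutoff amplitude. Promoting it to a bound of the form $|\mathcal{A}_{\mathcal{G}}|\leq K^{L}\,\Lambda^{\omega(\mathcal{G})}$ (up to logarithms, and keeping track of divergent subgraphs, whose presence spoils any naive single-scale estimate even for superficially convergent $\mathcal{G}$) is precisely what the multi-scale slicing of \cite{Lahoche:2015ola} accomplishes: decompose each propagator into dyadic slices, bound slice by slice, and sum over scale attributions. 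Since the lemma is used in the appendix only through the superficial degree and its variation $\delta\omega$ under the elementary moves of the recursion, your exponent is the right one; but if the argument is to stand alone, the scaling step must be replaced by (or supplemented with) that slicing argument rather than asserted as uniform.
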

The proof can be found in \cite{Lahoche:2015ola} using multi-scale decomposition.
\medskip
\begin{figure}
\begin{center}
\includegraphics[scale=0.8]{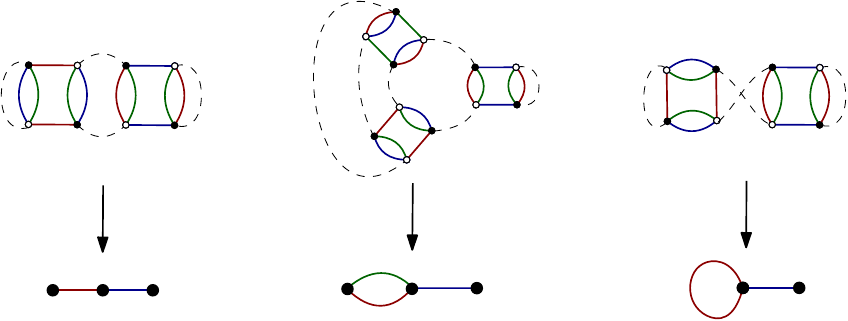}
\end{center}
\caption{Illustration of the mapping from original to intermediate field formalism for vacuum graphs. Original vertices become colored links and loops become vertices.}\label{figMapIntermediate}
\end{figure}

\begin{figure}
\begin{center}
\includegraphics[scale=1.5]{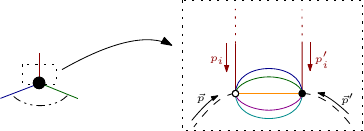}
\end{center}
\caption{Structure of the contact point between a colored (red) edge and a loop vertex.}\label{FigDag}
\end{figure}
Let us prove the first assertion of the theorem from a recursion on the number $\ell$ of edges in the $\Theta$-representation. For $\ell=1$, there are two diagrams:
\begin{equation}
\vcenter{\hbox{\includegraphics[scale=1]{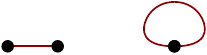}}}
\end{equation}
For the first one, $F=11$, $L=2$ and $R=2$, thus $\omega=-4+(11-2)=5$. For the second one $F=7$, $L=2$ and $R=2$, thus $\omega=1$; and the theorem holds. Now, let us consider a tree $\mathcal{T}_n$ in the $\Theta$-representation, having $\ell=n$. It has the following structure:
\begin{equation}
\vcenter{\hbox{\includegraphics[scale=1]{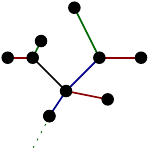}}}\,.
\end{equation}
From such a tree, four elementary moves are allowing to construct a $\Theta$-graph with $\ell=n+1$ from this one, all illustrated on Figure \ref{figrecur}. The move $a$ creates two corners and at least one face, and $\delta R=0$, leading to $\delta \omega=-3$. For $b$, we create a single face, increasing in the same time the rank by one, and $\omega=-4$. Finally, for $c$ and $d$, which preserves the tree structure, we increase the number of faces by $5$, the rank by $1$, and $\delta \omega=0$. The move which preserves the tree structure and preserves also the divergent degree, and we proved that trees in the $\Theta$-representation are the most divergent diagrams. We consider the following definition:
\begin{definition}
In the $\Theta^{-1}$-(original) representation, these trees is said to be melonic diagrams.
\end{definition}
\begin{figure}
\begin{center}
\includegraphics[scale=1]{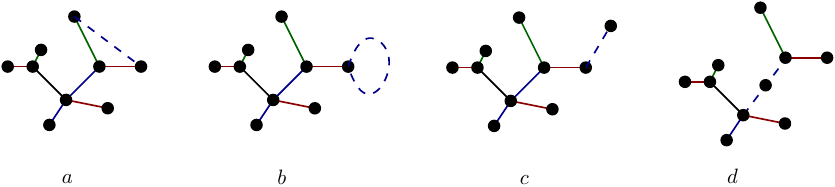}
\end{center}
\caption{The four moves allowing to construct a $n+1$ graph from $\mathcal{T}_n$.}\label{figrecur}
\end{figure}
Let us prove the second part of the theorem i.e. the power-counting renormalizability. To start, let us note that the leading order non-vacuum graph can be obtained from leading order vacuum graphs by deleting some internal (dotted) edges optimally. This method allows obtaining the structure of 1PI leading order graphs. Let us consider a 1PI vacuum diagram $\mathcal{G}=\Theta^{-1}(\mathcal{T})$. From the structure of the graph, it is clear that the deleted dotted edges have to be a tadpole, to keep the 1PI structure. Deleting one such tadpole discard one dotted edge, $d=6$ faces, and the rank decreases to $1$. As a consequence, the power counting decreases by $\delta \omega=-3$, and the leading order 1PI $2$-point function has divergent degree $\omega=2$. Deleting another edge, we construct a $4$-point diagram. For the same reason, the resulting diagram will be 1PI if and only if the deleted edge is a tadpole. Moreover, the deletion will be optimal if the deleted tadpole is along the boundary of an opened face. In that way we lost one internal edge, $d-1=5$ faces and the rank decreases by one. The variation of the divergent degree is therefore: $\delta \omega= 2-(5-1)=-2$, and the leading order $1PI$ $4$-point amplitudes have divergent degree $\omega=0$. Recursively, we prove the lemma \ref{propmelonfaces}. To check the power-counting just renormalizability, let us note that contracting the dotted edges (following the procedure of definition \ref{defsum}) along a spanning tree (in the $\Theta^{-1}$ representation) does not change the number of faces and the rank $R$, but increases by $2(V-1)$ the divergent degree\footnote{For a Feynman graph having $V$ vertices, the number of edges building a spanning tree is $V-1$. Note that, morally, such a contraction has not to change the behavior of the corresponding Feynman amplitude concerning some UV cut-off. Indeed, from definition \ref{defsum}, each contraction builds the sum of the two adjacent vertices, whose canonical dimension (see below) exactly compensate for the change of the power counting. This means that the "divergent degree does not change".}. The resulting graph has divergent degree $\tilde{\omega}=\omega+2(V-1)$; but is build as a single big vertex contracted with $L-V+1$ edges. These edges can then be contracted optimally. From lemma \ref{propmelonfaces}, we know that the contraction will be optimal if the contracted dotted edge is along the boundary of the mono-colored external faces running through the diagram. We can consider successive $(d-1)$-dipole contractions as follows. We recall that a $k$-dipole is made with two black and white nodes (in the original representation), wished together with one dotted edge and $k$ colored edges. In the intermediate field representation, and for a vacuum diagram, we can proceed both on the tree and $d-1$ dipole contraction as follow. Let us consider a leaf hooked to a loop vertex $b$ with $p$ external edges hooked to him. Contracting the leaf, we discard $(d-1)$-faces (or $d$ if it is the first one that we delete), $1$ dotted edge, and the rank $R$ decreases by $1$. We may assume that only leafs are hooked to $b$, except for one colored edge. Using the same procedure for all these leafs, we get an effective loop of length $p$, and we can select $p-1$ edges along the spanning tree. Deleting them, we get a tadpole, and we can repeat the same procedure, leading to $F-R=(d-2)(L-V+1)+1$, the $+1$ coming from the first contraction, which deletes $d$ faces. For a non-vacuum graph with $2N$ external edges, creating them cost $d-1$ faces per deleted tadpole, except for the first one, which cost $d$ faces, therefore: $F-R=(d-2)(L-V+1)$. Using the topological relation $2L=4V-N_{ext}$ holding for a quartic model, where $N_{ext}$ denotes the number of external edges, we get (see also equation \eqref{powercountingTrue}):
\begin{equation}
\omega=-4V+N_{ext}+2(4V-N_{ext}/2-2V+2)=4-N_{ext}\,.\label{melocountinplus}
\end{equation}
Which prove power counting just-renormalizability of the quatic melonic model.
\begin{flushright}
$\square$
\end{flushright}
To conclude this appendix, let us briefly discuss the scaling dimension problem. In the standard field theory context, a canonical notion of dimension is inherited from the background space. For a background independent field theory as GFT however, there is no such a canonical notion of dimension a priori. There are at this stage two ways to introduce a dimension in that context. The first way is to fix the dimension from physical considerations, from the contacts between GFT and the theory having connections with space-time like LQG. Another, the more abstract way is to fix the dimension from the power-counting itself. From the previous power counting, we know that leading order $4$-point diagrams have to scale as $\ln(\Lambda)$ with respect to some UV cut off $\Lambda$, and it is tempting to attribute a dimension $0$ for the quartic melonic coupling. In general, radiative corrections behaves like $\Lambda^n$; and we define the canonical dimension as the optimal $n$, i.e., following the behavior of the leading order quantum corrections.
\begin{definition}
Let $\mathcal{B}$ be a bubble having $2n$ black and white nodes. Let $\mathbb{A}(\mathcal{B}):=\{\mathcal{A}_\mathcal{B}\}$ the set of $2$-point graphs obtained from $\mathcal{B}$. The canonical dimension $d_\mathcal{B}$ of $\mathcal{B}$ is then defined as:
\begin{equation}
d_\mathcal{B}:= 2-\max_{\mathcal{A}_\mathcal{B}\in\mathbb{A}(\mathcal{B})} \, \omega(\mathcal{A}_\mathcal{B})\,.
\end{equation}
\end{definition}


\end{document}